\theoremstyle{plain}
\newtheorem{theorem}{Theorem}[section]
\newtheorem{proposition}[theorem]{Proposition}
\newtheorem{lemma}[theorem]{Lemma}
\theoremstyle{definition}
\theoremstyle{remark}
\newtheorem{remark}[theorem]{Remark}
\numberwithin{equation}{section}
\newcommand{\mc}[1]{{\mathcal #1}}
\newcommand{\bb}[1]{{\mathbb #1}}
\newcommand{\rme}{\mathrm{e}}
\newcommand{\rmd}{\mathrm{d}}
\newcommand{\eps}{\varepsilon}
\title[Particle approximation of the BGK equation]{Particle approximation of the BGK equation}
\author[P.\ Butt\`a]{Paolo Butt\`a}
\address{Paolo Butt\`a\hfill\break \indent
Dipartimento di Matematica, 
Sapienza Universit\`a di Roma,
\hfill\break \indent
P.le Aldo Moro 5, 00185 Roma, Italy}
\email{butta@mat.uniroma1.it}
\author[M.\ Hauray]{Maxime Hauray}
\address{Maxime Hauray\hfill\break \indent
I2M Universit\'e Publique de France
\hfill\break\indent
39, rue F. Joliot Curie, 13453 Marseille Cedex 13 
}
\email{maxime.hauray@univ-amu.fr}
\author[M.\ Pulvirenti]{Mario Pulvirenti}
\address{Mario Pulvirenti\hfill\break \indent
Dipartimento di Matematica, 
Sapienza Universit\`a di Roma,
\hfill\break \indent
P.le Aldo Moro 5, 00185 Roma, Italy}
\email{pulviren@mat.uniroma1.it}
\begin{document}

\begin{abstract}
In this paper we prove the convergence of a suitable particle system towards the BGK model. More precisely, we consider an interacting stochastic particle system in which each particle can instantaneously thermalize locally. We show that, under a suitable scaling limit, propagation of chaos does hold and the one-particle distribution function converges to the solution of the BGK equation. 
\end{abstract}

\keywords{BGK equation, kinetic limits, stochastic particle dynamics.}

\subjclass[2010]{
Primary: 82C40.  
Secondary: 60K35,  
82C22.  
}

\maketitle
\thispagestyle{empty}

\section{Introduction}
\label{sec:1}

The BGK model is a kinetic equation of the form,
\begin{equation}
\label{BGK}
(\partial_t f+ v\cdot \nabla_x f) (x,v,t) = \lambda \big(\varrho(x,t) M_f (x,v,t)-f(x,v,t)\big)\,,
\end{equation}
where
\[
M_f(x,v,t) = \frac 1 {(2 \pi T(x,t))^{d/2} } \exp\left(-\frac {|v-u(x,t)|^2}{2T(x,t)}\right)
\]
and
\[
\begin{split}
& \varrho (x,t) = \int\! \rmd v\, f(x,v,t)\,, \quad \varrho u (x,t) = \int\! \rmd v\, f(x,v,t) v\,, \\ & \varrho(u^2 +Td)(x,t)  = \int\! \rmd v\, f(x,v,t) |v|^2 \,.
\end{split}
\]

Eq.~\eqref {BGK}  governs the time evolution of  the one-particle distribution function $f=f(x,v,t)$, where $(x,v)$ denotes position and velocity of the particle and $t$ is the time. Here, $d=1,2,3$ is the dimension of the physical space. The BGK model describes the dynamics of a tagged particle which thermalizes instantaneously at Poisson random time of intensity $\lambda >0$. The Maxwellian $M_f$ has mean velocity and temperature given by $f$ itself.

This model was introduced by P.L.\ Bhatnagar, E.P.\ Gross, and M.\ Krook in \cite {BGK} as a simpler substitute to the fundamental and physically founded Boltzmann equation.  Clearly, the BGK model preserves local mass, momentum, and  energy, so that it shares many physical properties with the Boltzmann equation. Moreover, it satisfies the $H$-Theorem, and therefore it exhibits the usual hydrodynamic behavior in the limit of vanishing mean free path. 

Roughly speaking the BGK model was introduced to handle situations where the mean free path is very small (but positive) so that the hydrodynamic picture is inadequate. To fix the ideas, we consider a stochastic particle system like the DSMCM (the Bird Montecarlo method), thus equivalent to the Boltzmann evolution,  when the intensity of the interactions is very large and the free motion is finite. The BGK leading idea is that it is useless to compute in detail the very many interactions taking place locally, since we know a priori that the system is locally thermalizing. This means that we can replace the true dynamics with a jump process in which position and velocity of a given particle are instantaneously distributed according to the local equilibrium.  Inspired by these arguments, in this work, we present a stochastic system of $N$ interacting particles yielding the BGK equation in a suitable scaling limit. In this microscopic model, each particle moves freely up to some random instant in which it performs a random jump in position and velocity. The outgoing position and velocity are chosen according to a given distribution and a Maxwellian respectively, both determined by the actual particle configuration. See Section \ref{sec:2.2} for details.

In the limit $N\to \infty$, we expect that the one particle distribution function converges to the solution of the BGK equation, provided that, at initial time, the particles are independent (i.e., their distribution factorizes).

Obviously, the dynamics creates correlations because of the jump mechanism, which depend on the state of the full particle configuration. Note that the interaction is not binary in the present context, so that we do not use hierarchical techniques to obtain propagation of chaos. 

Actually, the convergence follows from the fact that the action on a given particle produced by any other particle is small (as in the mean-field limit), so that we can expect to recover the propagation of chaos in the limit $N\to \infty$.

We mention the recent work \cite{MW}, where the one dimensional homogeneous linear BGK equation has been obtained as a limit of a suitable particle process in which the thermalization is driven by the Kac's model. Therefore, the context and the approach are different from the ones of the present paper.

We note that in the original work \cite{BGK}, the jump rate is chosen $\lambda = \varrho$, namely the jumps are favorite whenever the spatial density is high. This case is mathematically much more involved compared with the case in which the rate $\lambda$ is constant so that here, we assume for simplicity $\lambda=1$. From a mathematical point of view, a constructive existence and uniqueness theorem for the BGK equation was given in \cite{PP}. Previous non-constructive existence results, in the spirit of the Di Perna-Lions theorem for the Boltzmann equation, were obtained in \cite{P} (see also \cite{GP}). Regarding the hydrodynamic limit we mention, e.g., \cite{S}. Actually, we are not aware of any constructive existence theorem for the solutions of the BGK equation when $\lambda = \varrho$ and this makes difficult to approach the particle approximation problem. However, the particle system yielding, at least formally, this BGK kinetic equation makes perfectly sense as we shall discuss briefly in Section \ref{sec:5}.

The plan of the paper is the following. The next section is devoted to notation, preliminary material, and statement of the results. The remaining sections are devoted to the proofs. More precisely, the convergence follows from two separate results: the convergence of a particle dynamics towards a regularized version of the BGK equation, and the removal of the cut-off to recover the true BGK equation from its regularized version. The former requires the main effort and it is the content of Section \ref{sec:3}, while the latter is proven in Section \ref{sec:4}. Finally, some concluding remarks are given in Section \ref{sec:5}.

\section{Preliminaries and statement of the results}
\label{sec:2}

Let $\bb T^d = \big(\bb R/(\frac 12 + \bb Z)\big)^d$ be the $d$-dimensional torus of side length one. We denote by $M_{u,T}=M_{u,T}(v)$, $v\in\bb R^d$, the normalized Maxwellian density of mean velocity $u\in \bb R^d$ and temperature $T$, i.e.,
\begin{equation}
\label{max}
M_{u,T}(v) = \frac1{(2\pi T)^{d/2}} \exp\left(-\frac{|v-u|^2}{2T}\right).
\end{equation}
In particular,
\[
u = \int\!\rmd v\, M_{u,T}(v)\, v \,, \qquad T = \frac 1d \int\!\rmd v\, M_{u,T}(v)\, |v-u|^2\,.
\]

\subsection{The BGK equation and its regularized version}
\label{sec:2.1}
We denote by $f = f(t) = f(x,v,t)$, where $(x,v)\in \bb T^d \times \bb R^d$ and $t\in \bb R_+$ is the time, the solution to the BGK equation,
\begin{equation}
\label{eq:bgk}
\partial_t f + v\cdot \nabla_x f = \varrho_f M_f - f\,,
\end{equation}
where $\varrho_f= \varrho_f(x,t)$ is the local density defined by
\begin{equation}
\label{eq:rho}
\varrho_f(x,t) = \int\!\rmd v\, f(x,v,t)\,,
\end{equation}
while $M_f = M_f(x,v,t)$ is the (local) Maxwellian given by
\begin{equation}
\label{eq:maxf}
M_f(x,v,t) = M_{u_f(x,t),T_f(x,t)}(v)\,,
\end{equation}
where $u_f=u_f(x,t)$ and $T_f=T_f(x,t)$ are the local velocity and temperature, 
\begin{align}
\label{eq:uf}
\varrho_f(x,t) u_f(x,t) & = \int\! \rmd v\, f(x,v,t)\, v\,, \\ \label{eq:tf} \varrho_f(x,t) T_f(x,t) & = \frac 1d \int\! \rmd v\, f(x,v,t)\, |v-u_f(x,t)|^2\,.
\end{align}

We also consider the solution $g = g(t) =g(x,v,t)$ of the ``regularized'' BGK equation,
\begin{equation}
\label{eq:kin}
\partial_t g + v\cdot \nabla_x g = \varrho_g^\varphi M_g^\varphi - g\,,
\end{equation}
where
\begin{equation}
\label{eq:maxg}
M_g^\varphi(x,v,t) = M_{u_g^\varphi(x,t),T_g^\varphi(x,t)}(v)\,.
\end{equation}
In Eqs.~\eqref{eq:kin} and \eqref{eq:maxg}, $\varrho_g^\varphi = \varrho_g^\varphi(x,t) $, $u_g^\varphi=u_g^\varphi(x,t)$, and $T_g^\varphi=T_g^\varphi(x,t)$ are smeared versions of the local density, velocity, and temperature. More precisely,
\begin{align}
\label{eq:rphi}
\varrho_g^\varphi(x,t) & = (\varphi*\varrho_g) (x,t) = \int\!\rmd y\, \varphi(x-y)\varrho_g(y,t), \\ \label{eq:uphi}
\varrho_g^\varphi(x,t) u_g^\varphi (x,t) & = \int\! \rmd y\, \rmd v\, \varphi(x-y) g(y,v,t)\, v\,, \\ \label{eq:tphi} \varrho_g^\varphi(x,t) T_g^\varphi (x,t) & = \frac 1d \int\! \rmd y\, \rmd v\, \varphi(x-y) g(y,v,t)\, |v-u_g^\varphi(x,t)|^2\,,
\end{align}
where
\begin{equation}
\label{eq:rhog}
\varrho_g(x,t) = \int\!\rmd v\, g(x,v,t)
\end{equation}
and $\varphi$ is a strictly positive, even, and smooth smearing function, i.e.,
\begin{equation}
\label{eq:varphi}
\varphi\in C^\infty(\bb T^d;\bb R_+)\,,  \qquad \varphi(x) = \varphi(-x)\,, \qquad \int\!\rmd y\, \varphi(y) = 1\,.
\end{equation}

Well-posedness of the BGK equation together with $L^\infty$ estimates for the hydrodynamical fields can be found in \cite{PP}. In particular, the following proposition follows immediately from \cite[Theorem 3.1]{PP}.

\begin{proposition}
\label{prop:bgk}
Let $f_0$ be a probability density on $\bb T^d\times \bb R^d$ and suppose there are a function $a\in C(\bb R^d)$ and positive constants $C_1,\alpha>0$ such that
\begin{equation}
\label{eq:f0}
\begin{split}
& a(v) \le f_0(x,v) \le C_1\rme^{-\alpha|v|^2} \quad \forall\, (x,v)\in \bb T^d\times \bb R^d\,, \\ & a \ge 0\,, \quad C_2 := \int\!\rmd v \, a(v) >0\,.
\end{split}
\end{equation}
Then there exists a mild solution $f=f(t)=f(x,v,t)$ to Eq.~\eqref{eq:bgk} with initial condition $f(x,v,0) = f_0(x,v)$.\footnote{This means that $f$ solves the integral equation,
\[
f(x,v,t) = \rme^{-t} f_0(x-vt,v) + \int_0^t\!\rmd s\, \rme^{-(t-s)} (\varrho_f M_f) (x-v(t-s),v,s)\,,
\] 
which formally derives from Eq.~\eqref{eq:bgk} via Duhamel formula.} Moreover, there are a non-decreasing finite function $t\mapsto K_{q,t} = K_{q,t}(f_0)$, $q\in\bb N$, and a non-increasing positive function $t\mapsto A_t = A_t(f_0)$ such that, for any $(x,t)\in \bb T^d\times \bb R_+$,
\begin{align}
\label{eq:utf}
& |u_f(x,t)| + T_f(x,t) + \mc N_q(f(t)) \le K_{q,t}\,, \\ \label {eq:rT>} & \varrho_f(x,t) \ge C_2 \rme^{-t}\,, \quad T_f (x,t) \ge A_t \,,
\end{align}
where
\begin{equation}
\label{eq:Nn}
\mc N_q(f) := \sup_{(x,v)\in \bb T^d\times \bb R^d} f(x,v) (1+ |v|^q)\,.
\end{equation}
Finally, the above solution is unique in the class of functions $f=f(t)=f(x,v,t)$ such that, for some $q>d+2$, $\sup_{t\le \tau}\mc N_q(f(t)) < +\infty$ for any $\tau>0$.
\end{proposition}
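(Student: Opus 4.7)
The statement is essentially a repackaging of \cite[Theorem 3.1]{PP}, so the plan is to verify that the hypotheses \eqref{eq:f0} match those used there and to extract the quantitative bounds. The approach I would take is a Picard iteration on the mild formulation written in the footnote, combined with separate a priori upper and lower bounds on the hydrodynamic fields $\varrho_f$, $u_f$, $T_f$ that are propagated jointly along the scheme. Start from $f^{(0)}(x,v,t) := \rme^{-t} f_0(x-vt,v)$ and define $f^{(n+1)}$ by inserting $f^{(n)}$ in the right-hand side of the Duhamel integral.

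The lower bounds are the easiest and I would establish them first. From the Duhamel identity and nonnegativity of the source term $\varrho_{f^{(n)}} M_{f^{(n)}}$ one has the pointwise minorant $f^{(n)}(x,v,t) \ge \rme^{-t} f_0(x-vt,v) \ge \rme^{-t} a(v)$. Integration in $v$ immediately yields $\varrho_{f^{(n)}}(x,t) \ge C_2 \rme^{-t}$, which is the first assertion in \eqref{eq:rT>}. For the temperature lower bound I would use that $\int\! \rmd v\, a(v)\,|v-u|^2$ attains a strictly positive minimum over $u\in \bb R^d$ (by continuity of $a$ and positivity of its total mass), so that $\varrho_{f^{(n)}} T_{f^{(n)}}$ is bounded below by $\rme^{-t}$ times this minimum divided by $d$; combined with an a priori upper bound on $\varrho_{f^{(n)}}$ coming from the $\mc N_0$ estimate, this yields the second assertion in \eqref{eq:rT>} with a nonincreasing function $A_t$.

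For the upper bounds \eqref{eq:utf} I would propagate $\mc N_q(f^{(n)}(t))$ along the iteration by Gronwall. The Gaussian tail of $f_0$ gives $\mc N_q(f^{(0)}(t)) < \infty$, and the key pointwise estimate $\varrho_{f^{(n)}} M_{f^{(n)}}(x,v,t)\,(1+|v|^q) \le C\, \mc N_0(f^{(n)}(t))\, T_{f^{(n)}}^{-d/2}\,(1+|u_{f^{(n)}}|^q + T_{f^{(n)}}^{q/2})$ closes the loop once the temperature lower bound and the moment bounds on $u_f, T_f$ (which follow from $\mc N_q$ for $q > d+2$) are in place. The delicate point is the circular dependence: the upper bound on the Maxwellian depends on the temperature lower bound, while the absolute form of the temperature lower bound depends on a density upper bound. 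I would resolve this either by running both estimates simultaneously on a short time interval and then extending, or by working with the map $f^{(n)} \mapsto f^{(n+1)}$ restricted to the closed subset of functions satisfying both bounds simultaneously and showing stability there.

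Once these a priori bounds are uniform in $n$, the nonlinear map $f \mapsto \varrho_f M_f$ becomes Lipschitz in the weighted sup-norm associated with $\mc N_q$, using smoothness of $(u,T) \mapsto M_{u,T}$ away from $T=0$, so a standard contraction argument yields both convergence of the iteration to a mild solution and the uniqueness statement in the class $\{\sup_{t \le \tau} \mc N_q(f(t)) < \infty\}$. The main obstacle, and the reason the hypothesis \eqref{eq:f0} is stated in the asymmetric form with the minorant $a$, is precisely the bootstrap described above: the nonlinearity in \eqref{eq:bgk} becomes singular as $T_f \to 0$, and the condition $f_0 \ge a$ with $\int a > 0$ is exactly what is needed to propagate a strictly positive lower bound on $T_f$ forward in time and thereby turn the source term into a Lipschitz function of the solution.
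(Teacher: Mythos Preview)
Your identification is correct: the paper does not prove this proposition at all, but simply invokes \cite[Theorem~3.1]{PP}. Your sketch of how that proof runs is accurate in its broad strokes---a priori bounds on $\mc N_q(f)$ via Gr\"onwall (this is the content of \cite[Proposition~2.1]{PP}), lower bounds on $\varrho_f$ and $T_f$ from the Duhamel minorant $f\ge \rme^{-t}a$, and then an iteration scheme once the nonlinearity is tamed. Two points of comparison with what \cite{PP} (and the paper, in its Appendix~\ref{app:a} for the regularized equation) actually do:

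First, the contraction and uniqueness in \cite{PP} are carried out in the weighted space $L^1((1+|v|^2)\,\rmd x\,\rmd v)$, not in the weighted sup-norm $\mc N_q$ you propose. Both choices work once the a priori bounds are in place, but the $L^1$ route is what the paper relies on elsewhere (see the proof of Theorem~\ref{thm:fgd} and Remark~\ref{rem:1}), and it is the natural one because the Lipschitz estimate for $(u,T)\mapsto M_{u,T}$ is most cleanly stated after integrating in $v$.

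Second, a small slip: $\mc N_0(f)=\sup f$ does not by itself bound $\varrho_f=\int f\,\rmd v$; you need $\mc N_q$ with $q>d$ to control the density from above and thereby convert the lower bound on $\varrho_f T_f$ into one on $T_f$. This is harmless since you are already propagating $\mc N_q$ for $q>d+2$, but the dependence should be stated correctly.
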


The analysis in \cite{PP} can be extended straightforwardly to the regularized BGK equation, in particular the $L^\infty$ estimates do not depend on the smearing function $\varphi$. This is the content of the following proposition, whose proof is sketched in Appendix~\ref{app:a}.
\begin{proposition}
\label{prop:stim_uT}
Let $g=g(t)=g(x,v,t)$ be the solution to Eq.~\eqref{eq:kin} with initial condition $g(x,v,0) = f_0(x,v)$, $f_0$ as in Proposition \ref{prop:bgk}. Then, similar estimates hold for the corresponding hydrodynamical fields, namely,
\begin{align}
\label{eq:utK}
& |u_g^\varphi(x,t)| + T_g^\varphi(x,t) + \mc N_q(g(t)) \le K_{q,t}\,, \\ \label{stimrho} & \varrho_g(x,t) \ge C_2 \rme^{-t}\,, \quad \varrho_g^\varphi(x,t) \ge C_2\rme^{-t}\,, \\ \label{eq:TA} & T_g^\varphi (x,t) \ge A_t\,,
\end{align}
(with $K_ {q,t}$, $A_t$ independent of $\varphi$). 
\end{proposition}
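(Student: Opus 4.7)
The plan is to mimic the argument for Proposition~\ref{prop:bgk} in \cite{PP}, tracking at each step how the convolution with $\varphi$ enters. The key observation is that $\varphi$ is a probability density: by Young's inequality $\|\varphi*h\|_\infty\le\|h\|_\infty$, so convolution preserves $L^\infty$ upper bounds; and $\varphi*h\ge\inf h$ whenever $h\ge 0$, so it preserves positive lower bounds as well. Consequently every constant produced below can be chosen independent of the smearing function.

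First I would establish the pointwise lower bounds \eqref{stimrho}. Writing the mild form of \eqref{eq:kin},
\begin{equation*}
g(x,v,t) = \rme^{-t} f_0(x-vt,v) + \int_0^t\!\rmd s\, \rme^{-(t-s)} \big(\varrho_g^\varphi M_g^\varphi\big)(x-v(t-s),v,s),
\end{equation*}
and dropping the nonnegative source term, hypothesis \eqref{eq:f0} gives the pointwise bound $g(x,v,t)\ge \rme^{-t} a(v)$. Integrating in $v$ yields $\varrho_g(x,t)\ge C_2\rme^{-t}$, and convolving with $\varphi$ (using $\int\varphi=1$) yields $\varrho_g^\varphi(x,t)\ge C_2\rme^{-t}$.

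The only genuinely new step relative to \cite{PP} is the temperature lower bound \eqref{eq:TA}. From the definition \eqref{eq:tphi} and the bound $g(y,v,s)\ge\rme^{-s}a(v)$,
\begin{equation*}
\varrho_g^\varphi(x,t)\,T_g^\varphi(x,t) \ge \frac{\rme^{-t}}{d}\int\!\rmd v\, a(v)\,|v-u_g^\varphi(x,t)|^2,
\end{equation*}
again by $\int\varphi(x-y)\,\rmd y=1$. Since $a$ is continuous, nonnegative, and satisfies $\int a=C_2>0$, the function $u\mapsto\int a(v)|v-u|^2\,\rmd v$ is continuous and strictly positive, hence admits a positive minimum on every compact set; combined with the forthcoming upper bound $|u_g^\varphi|\le K_{q,t}$, this bounds the right-hand side from below by some $c(t)>0$. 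Dividing by the upper bound on $\varrho_g^\varphi$ (which follows from the upper bound on $\mc N_q(g)$ with $q>d$) produces $A_t$.

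Finally I would prove the upper bounds \eqref{eq:utK} by a Gronwall-type bootstrap on the mild formulation. Using $M_g^\varphi(x,v,t)\le(2\pi T_g^\varphi(x,t))^{-d/2}$ together with the Gaussian decay, one dominates $(1+|v|^q)\,g(x,v,t)$ by $(1+|v|^q)f_0(x-vt,v)$ plus a time integral involving $\|\varrho_g^\varphi\|_\infty$, $\|u_g^\varphi\|_\infty$ and the pointwise lower bound of $T_g^\varphi$; the hypothesis $q>d+2$ in turn controls these three hydrodynamic fields by moments of $g$, yielding a closed integral inequality of the form $\mc N_q(g(t))\le F\big(t,\sup_{s\le t}\mc N_q(g(s))\big)$, from which $K_{q,t}$ follows by standard continuation. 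The main (but mild) obstacle will be the apparent circularity: the upper bound on $\mc N_q(g)$ depends on $A_t$ through the $T^{-d/2}$ prefactor, while $A_t$ depends on $K_{q,t}$. Exactly as in \cite{PP}, one breaks the circle by a short-time contraction argument combined with continuity in $t$, iterating to obtain the global estimates.
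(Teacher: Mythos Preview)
Your overall strategy---mimic \cite{PP} and check that convolution with the probability density $\varphi$ preserves every estimate---is exactly what the paper does, and your proofs of \eqref{stimrho} and \eqref{eq:TA} are fine. The difference lies in how you close the upper bound \eqref{eq:utK}, and here you have introduced an unnecessary complication.

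You bound $(1+|v|^q)\varrho_g^\varphi M_g^\varphi$ by separately controlling $\|\varrho_g^\varphi\|_\infty$, $\|u_g^\varphi\|_\infty$, and the lower bound $A_t$ on $T_g^\varphi$. This creates the circularity you flag: the $T^{-d/2}$ prefactor forces the $\mc N_q$ estimate to depend on $A_t$, which in turn depends on $K_{q,t}$. You then propose to break the circle ``exactly as in \cite{PP}'' by a short-time contraction. But that is not what \cite{PP} does: there is no circularity there to break. The key device---\cite[Proposition~2.1]{PP}---gives algebraic inequalities of the type $\varrho/T^{d/2}\le C N_0(h)$ and $\varrho(T+|u|^2)^{(q-d)/2}\le C_q N_q(h)$ for \emph{any} nonnegative $h$ with moments $\varrho,u,T$, and these combine to yield $\sup_v |v|^q\,\varrho\, M_{u,T}(v)\le C_q N_q(h)$ directly, with no a~priori lower bound on $T$ required. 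The paper applies this with $h=g^\varphi$ (observing that $\varrho_g^\varphi,u_g^\varphi,T_g^\varphi$ are precisely the moments of $g^\varphi$), obtains $\mc N_q(\varrho_g^\varphi M_g^\varphi)\le C_q\mc N_q(g^\varphi)\le C_q\mc N_q(g)$, and closes Gr\"onwall in one step. The bounds on $|u_g^\varphi|$, $T_g^\varphi$, and $A_t$ then follow \emph{afterwards} from the established $\mc N_q$ control.

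Your route would likely still work, but the short-time iteration is extra labor and your attribution of it to \cite{PP} is inaccurate.
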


\subsection{The stochastic particle system}
\label{sec:2.2}

We consider a system of $N$ particles moving in the $d$-dimensional torus $\bb T^d$. We denote by $Z_N=(X_N,V_N)$ the state of the system, where $X_N\in(\bb T^d)^N $ and $V_N\in (\bb R^d)^N $ are the positions and velocities of particles, respectively.

Recalling $\varphi$ denotes a smearing localizing function with the properties detailed in Eq.~\eqref{eq:varphi}, setting $X_N=(x_1, \dots, x_N)$ and $V_N=(v_1, \dots, v_N)$,
we introduce the (smeared) empirical hydrodynamical fields $\varrho_N^\varphi$, $u_N^\varphi$, and $T_N^\varphi$ (depending on $Z_N$) defined by \[
\begin{split}
& \varrho_N^\varphi(x) = \frac 1N \sum_{j=1}^N \varphi (x-x_j)\,, \quad  \varrho_N^\varphi u_N^\varphi(x) = \frac 1N \sum_{j=1}^N \varphi (x-x_j) v_j\,, \\ & \varrho_N^\varphi T_N^\varphi(x) = \frac 1{Nd} \sum_{j=1}^N \varphi (x-x_j) |v_j-u_N^\varphi(x)|^2 \,.
\end{split}
\]

The system evolves according to a Markovian stochastic dynamics, whose generator $\mc L_N$ is defined as
\begin{align}
\label{gen}
\mc L_N{G}(Z_N) & = [(V_N\cdot\nabla_{X_N}- N)G] (Z_N) \nonumber \\ & \quad + \sum _{i=1}^N \int\! \rmd \tilde x_i \, \rmd \tilde v_i\, \varphi(\tilde x_i - x_i) M_{Z_N}^\varphi (\tilde x_i,\tilde v_i) {G} (Z_N^{i,(\tilde x_i,\tilde v_i)})\,,
\end{align}
where $Z_N^{i,(y,w)}=(X_N^{i,y},V_N^{i,w})$ is the state obtained from $Z_N=(X_N,V_N)$ by replacing the position $x_i$ and velocity $v_i$ of the $i$-th particle by $y$ and $w$ respectively; ${G}$ is a test function on the state space, and $M_{Z_N}^\varphi(x,v) $ is the Maxwellian constructed via the empirical fields,
\[
M_{Z_N}^\varphi(x,v) = M_{u_N^\varphi(x),T_N^\varphi(x)}(v)\,.
\]
We emphasize that the process is well defined since, as $\varphi$ is strictly positive, the smeared hydrodynamical temperature $T_N^\varphi(x,t)$ is vanishing only if $v_j=u_N^\varphi(x,t)$ for all $j=1,\ldots,N$, and this is a negligible event, and even if $T_N^\varphi(x,t)=0$, we could replace the Maxwellian by a Dirac mass in $u_N^\varphi(x,t)$, and the $N$ particle dynamics will be well defined in any case.

The generator Eq.~\eqref{gen} is associated to the process $Z_N(t) = (X_N(t),V_N(t))$ in which at each Poisson time, of intensity $N$, a particle chosen with probability $1/N$ performs a jump from its actual position and velocity $(x_i,v_i)$ to the new ones $(\tilde x_i,\tilde v_i)$, extracted according to the distribution $\varphi(\cdot-x_i)$ for the position and then to the empirical Maxwellian $M_{Z_N}^\varphi (\tilde x_i,\cdot)$ for the velocity. In the sequel, we will denote by $F_N(t) = F_N(Z_N,t)$ the density of the law of $Z_N(t)$ (but we will often refer to it as simply the law of the process).

\smallskip
\noindent\textbf{A notation warning.} In what follows, we shall denote by $C$ a generic positive constant whose numerical value may change from line to line and it may possibly depend on time $t$ and initial condition $f_0$.

\subsection{Results}
\label{sec:2.3}

The particle approximation of the BGK equation is achieved in two steps. We first show that, for fixed smearing function $\varphi$, the stochastic dynamics defined above is a particle approximation to the BGK regularized equation Eq.~\eqref{eq:kin}. This is in fact the main result of the paper and it is the content of Theorem \ref{teo:main} below. We next consider a $\delta$-approximating sequence $\{\varphi_\eps\}$ of smearing functions and show that the corresponding BGK regularized equations furnish an approximation to the BGK equation. From these results we deduce that the stochastic dynamics constructed with smearing function $\varphi = \varphi_{\eps_N}$, for a suitable choice of $\eps_N$ (converging to $0$ slowly as $N\to +\infty$), gives the required particle approximation of the BGK equation.

\begin{theorem}
\label{teo:main}
Suppose that the law of $Z_N(0)$ is $F_N(0) = f_0^{\otimes N}$, where $f_0$ satisfies the assumptions detailed in Eq.~\eqref{eq:f0}. Let $g=g(t)=g(x,v,t)$ be the solution to Eq.~\eqref{eq:kin} with initial condition $g(0)=f_0$ and smearing function $\varphi$ as detailed in Eq.~\eqref{eq:varphi}. Define also
\begin{equation}
\label{Cphi}
\Gamma_\varphi := (1+C_\varphi^8) (1+\|\varphi\|_\infty^8) (1+\|\nabla\varphi\|_\infty^2)\,, \qquad C_\varphi := \Big(\min_{x\in \bb T^d} \varphi(x)\Big)^{-1}\,.
\end{equation}
Let $f_j^N(t)$, $j\in\{1,\ldots, N\}$, be the $j$-particle marginal distribution function of the (symmetric) law $F_N(t)$,i.e.,
\[
f_j^N(x_1,\ldots,x_j,v_1,\ldots,v_j,t) = \int\! \rmd x_{j+1} \cdots \rmd x_N\, \rmd v_{j+1} \cdots \rmd v_N\, F_N(X_N,V_N,t)\,.
\]
Then, the 2-Wasserstein distance\footnote{If $\mu$ and $\nu$ are two probability measures on a metric space $(M,d)$ with finite second moment, the 2-Wasserstein distance between $\mu$ and $\nu$ is defined as 
\[
\mc W_2(\mu ,\nu) = \left(\inf_{\gamma\in \mc P(\mu,\nu)} \int_{M\times M}\!\rmd \gamma(x,x')\, d(x,x')^2\right)^{1/2},
\]
where $\mc P(\mu,\nu)$ denotes the collection of all measures on $M\times M$ with marginals $\mu$ and $\nu$. Here, $M=(\bb T^d)^j\times (\bb R^d)^j$ and $\mc W_2\big(f_j^N(t), g(t)^{\otimes j}\big)$ denotes the 2-Wasserstein distance between the probability measures with densities $f_j^N(t)$ and $g(t)^{\otimes j}$ respectively.} $\mc W_2\big(f_j^N(t), g(t)^{\otimes j}\big)$ vanishes as $N\to +\infty$ for any $j\in \bb N$ and $t \ge 0$. More precisely, there exists a non-decreasing finite function $t\mapsto L_t = L_t(f_0)$ such that, for any $j\in\{1,\ldots, N\}$ and $t\ge 0$,
\begin{equation}
\label{w2s}
\mc W_2\big(f_j^N(t),g(t)^{\otimes j}\big)^2 \le \frac jN L_t\Gamma_\varphi \exp (L_t\Gamma_\varphi)\,.
\end{equation}
In particular, the one particle marginal distribution function $f_1^N(t)$  weakly converges to $g(t)$ as $N\to +\infty$ for any $t \ge 0$.
\end{theorem}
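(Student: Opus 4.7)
The plan is a synchronous coupling of the particle system $Z_N(t)$ with $N$ independent copies $\tilde Z_N(t) = (\tilde z_1,\ldots,\tilde z_N)$ of the nonlinear McKean--Vlasov process whose one-time marginal is $g(t)$. This nonlinear process consists of free transport punctuated, at rate $1$, by jumps $(\tilde x,\tilde v) \mapsto (\tilde x+\xi,w)$ with $\xi \sim \varphi$ and $w \sim M_g^\varphi(\tilde x+\xi,\cdot,t)$; the identity $\varphi * \varrho_g = \varrho_g^\varphi$ shows that its one-time marginal satisfies Eq.~\eqref{eq:kin}, so if $\tilde z_i(0) \sim f_0$ are i.i.d.\ they remain i.i.d.\ with law $g(t)$ at each $t$. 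I couple the two systems by (a) identical initial data $z_i(0) = \tilde z_i(0)$, (b) shared per-index Poisson clocks of rate $1$, (c) the same shift $\xi \sim \varphi$ at each jump of index $i$ (so $x_i^{\mathrm{new}} - \tilde x_i^{\mathrm{new}} = x_i - \tilde x_i$), and (d) the optimal affine Gaussian coupling of the two Maxwellians,
\begin{equation*}
v_i^{\mathrm{new}} = u_N^\varphi(x_i^{\mathrm{new}}) + \sqrt{T_N^\varphi(x_i^{\mathrm{new}})}\,G_i, \qquad \tilde v_i^{\mathrm{new}} = u_g^\varphi(\tilde x_i^{\mathrm{new}},t) + \sqrt{T_g^\varphi(\tilde x_i^{\mathrm{new}},t)}\,G_i,
\end{equation*}
with $G_i \sim \mc N(0,I)$ independent in $i$. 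The coupling is exchangeable, so setting $\Phi(t) := \sum_{i=1}^N \bb E\big[|x_i(t) - \tilde x_i(t)|^2 + |v_i(t) - \tilde v_i(t)|^2\big]$, the sub-additivity $\mc W_2(f_j^N(t),g(t)^{\otimes j})^2 \le \bb E\big[\sum_{i=1}^j |z_i(t)-\tilde z_i(t)|^2\big] = (j/N)\Phi(t)$ reduces Theorem \ref{teo:main} to showing $\Phi(t) \le L_t \Gamma_\varphi \exp(L_t \Gamma_\varphi)$.

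Next I would compute $\frac{\rmd}{\rmd t}\Phi$ via the joint generator. Free transport contributes $\sum_i \bb E[2(x_i-\tilde x_i)\cdot(v_i-\tilde v_i)] \le \Phi(t)$, while the common shift leaves all position differences unchanged at jumps. The rate-$1$ jump of index $i$ replaces $|v_i-\tilde v_i|^2$ by (after conditional averaging over $\xi$ and $G_i$)
\begin{equation*}
\bb E_\xi\!\Big[\big|u_N^\varphi(x_i+\xi) - u_g^\varphi(\tilde x_i+\xi,t)\big|^2 + d\big(\sqrt{T_N^\varphi(x_i+\xi)} - \sqrt{T_g^\varphi(\tilde x_i+\xi,t)}\big)^2\Big].
\end{equation*}
The loss terms $-\sum_i \bb E|v_i-\tilde v_i|^2$ absorb the transport contribution, so matters reduce to bounding the sum over $i$ of the above Maxwellian discrepancy by $L\Gamma_\varphi(\Phi(t)+1)$; Gr\"onwall with $\Phi(0)=0$ then yields the claim.

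The central estimate is this Maxwellian comparison. Writing $x = x_i+\xi$ and $\tilde x = \tilde x_i+\xi$, and letting $\hat u_N^\varphi$ denote the field built from the i.i.d.\ copies $\tilde Z_N$, I would decompose
\begin{equation*}
u_N^\varphi(x) - u_g^\varphi(\tilde x,t) = \big[u_N^\varphi(x) - \hat u_N^\varphi(x)\big] + \big[\hat u_N^\varphi(x) - \hat u_N^\varphi(\tilde x)\big] + \big[\hat u_N^\varphi(\tilde x) - u_g^\varphi(\tilde x,t)\big].
\end{equation*}
The first bracket is treated via the identity $A/B - A'/B' = (A-A')/B + (A'/B')(B'-B)/B'$ with $B = N\varrho_N^\varphi(x) \ge N/C_\varphi$ (forced by $\varphi \ge 1/C_\varphi > 0$); Lipschitz estimates on $\varphi$, iterated Cauchy--Schwarz, and the moment bound $\mc N_q(g(t)) \le K_{q,t}$ from Proposition \ref{prop:stim_uT} produce contributions proportional to $C_\varphi^{O(1)}(\|\varphi\|_\infty+\|\nabla\varphi\|_\infty)^{O(1)}\Phi(t)/N$ per $i$ (moments of the particle velocities $v_j$ being controlled via $|v_j| \le |\tilde v_j|+|v_j-\tilde v_j|$). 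The second bracket is a spatial Lipschitz estimate $|\hat u_N^\varphi(x)-\hat u_N^\varphi(\tilde x)| \le C C_\varphi\|\nabla\varphi\|_\infty|x_i-\tilde x_i|$. The third bracket is a classical law-of-large-numbers fluctuation: conditionally on $\tilde z_i$, it is a normalized sum of $N-1$ independent variables with mean $u_g^\varphi(\tilde x,t)$ and $O(1)$ variance, so $\sum_i \bb E|\cdot|^2 = O(1)$. The analogous three-piece decomposition for $T_N^\varphi - T_g^\varphi$, combined with $(\sqrt a - \sqrt b)^2 \le (a-b)^2/b$ and the lower bound $T_g^\varphi \ge A_t$ from Eq.~\eqref{eq:TA}, dispatches the temperature piece. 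The hard part is exactly this bookkeeping: $u_N^\varphi$ and $T_N^\varphi$ are \emph{nonlinear} (ratio and square-root) functionals of the empirical smeared moments, and extracting the correct joint dependence on $C_\varphi$, $\|\varphi\|_\infty$, $\|\nabla\varphi\|_\infty$ requires iterated Cauchy--Schwarz together with the a priori bounds $\varrho_g^\varphi \ge C_2\rme^{-t}$, $T_g^\varphi \ge A_t$, $\mc N_q(g(t)) \le K_{q,t}$ --- this is what produces the high exponents $C_\varphi^8$ and $\|\varphi\|_\infty^8$ in $\Gamma_\varphi$.
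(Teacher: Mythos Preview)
Your overall strategy matches the paper's: couple $Z_N$ with $N$ i.i.d.\ copies of the nonlinear process via shared clocks, common spatial shifts, and the optimal Gaussian coupling of the two Maxwellians; then run Gr\"onwall on the expected per-particle squared distance. The decomposition of $u_N^\varphi - u_g^\varphi$ through the intermediate empirical field $\hat u_N^\varphi$ is also the paper's idea (the paper uses two pieces rather than three, absorbing your spatial-Lipschitz bracket into the first one through normalized weights $p_j,q_j$, but this is cosmetic).

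There is, however, a genuine gap in your treatment of the first bracket. The comparison of $u_N^\varphi$ (and especially $T_N^\varphi$) with $\hat u_N^\varphi$ (resp.\ $\hat T_N^\varphi$) does \emph{not} produce a clean contribution $\lesssim \Gamma_\varphi \Phi(t)/N$ per index: the pointwise bound carries an extra \emph{random} prefactor of the form $1+\frac1N\sum_j|\tilde v_j|^4$ (cf.\ Lemma~\ref{lem:esti1}), because the temperature difference is quadratic in the velocities. This prefactor is correlated with the squared differences, so it cannot simply be replaced by its expectation, and your proposed workaround $|v_j|\le|\tilde v_j|+|v_j-\tilde v_j|$ only makes matters worse --- it injects extra powers of the very quantities you are trying to control and would turn the Gr\"onwall inequality nonlinear. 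The paper closes this with two ingredients you do not mention: (i) a good-set decomposition $\mc G_a=\{\frac1N\sum_j|\tilde v_j|^4\le a\}$ on which the prefactor is deterministically $\le 1+a$, so the integral over $\mc G_a$ is $\le C\Gamma_\varphi(1+a)I_N(t)$; (ii) on $\mc G_a^\complement$, Cauchy--Schwarz together with a \emph{direct} moment bound $\int\!\rmd F_N(t)\,|v_1|^8\le C\exp(C\Gamma_\varphi)$ on the interacting system (Lemma~\ref{lem:esti2}, proved by a separate generator computation, not through the coupling) gives a uniform bound on $\bb E[D_1^2]$, while $R_N(\mc G_a^\complement)=O(1/N)$ by the law of large numbers. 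This second ingredient is exactly where the $\exp(L_t\Gamma_\varphi)$ in Eq.~\eqref{w2s} originates, and it cannot be extracted from the coupling estimate alone.
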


Now, we fix a sequence $\{\varphi_\eps\}$, $\eps\in (0,1)$, of smearing functions such that, in addition to Eq.~\eqref{eq:varphi}, fulfil the following condition,
\begin{align}
\label{stim1}
& \|\varphi_\eps\|_\infty \le C \eps^{-d}\,, \qquad \|\nabla\varphi_\eps\|_\infty \le C \eps^{-(d+1)}\,, \qquad C_{\varphi_\eps} \le C\eps^{-1}\,, \\ \label{stim2} &  \|\varphi_\eps * J - J\|_\infty \le C(J) \eps\, \quad \forall\, J\in C^1(\bb T^d)\,,
\end{align}
with $C_{\varphi_\eps}$ as in Eq.~\eqref{Cphi} and $C(J)$ is a constant multiple of $(\|J\|_\infty + \|\nabla J\|_\infty)$. For example, given a smooth function $\Phi \colon \bb R^d \to \bb R_+$, with $\int\!\rmd z \, \Phi(z) =1$ and compactly supported inside the ball of radius $1/2$ centered in $z=0$, it is readily seen that the functions $\varphi_\eps$ on $\bb T^d$ defined by setting
\begin{equation}
\label{phieps}
\varphi_\eps(x) = \frac{\eps+\eps^{-d}\Phi(x/\eps)}{1+\eps}\,, \qquad  x\in \left(-\frac 12,\frac 12\right)^d, \;\; \eps\in (0,1)\,,
\end{equation}
satisfy the conditions in Eqs.~\eqref{eq:varphi}, \eqref{stim1}, and \eqref{stim2}.

We next denote by $g^\eps$ the solution to the regularized BGK equation with smearing function $\varphi_\eps$. Our goal is to compare $g^\eps$ with the solution $f$ of the BGK equation which satisfies the same initial condition. 

\begin{theorem}
\label{thm:fgd}
Assume $f(0) = g^\eps(0) = f_0$, where $f_0$ is a differentiable density satisfying the condition in Eq.~\eqref{eq:f0} and such that, for some $q>d+2$,
\begin{equation}
\label{dg0}
\mc N_q(|\nabla_x f_0|) < + \infty\,.
\end{equation}
Then, for any $t\ge 0$,
\begin{equation}
\label{eq:f-g}
\int\!\rmd x\, \rmd v\, (1+ |v|^2) \, |f(x,v,t) - g^\eps(x,v,t)| \le C \eps\,.
\end{equation}
\end{theorem}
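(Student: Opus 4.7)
The natural strategy is to track the weighted $L^1$ distance $\Delta(t):=\int\!\rmd x\,\rmd v\,(1+|v|^2)|f-g^\eps|(x,v,t)$. Setting $h:=f-g^\eps$ and subtracting \eqref{eq:bgk} and \eqref{eq:kin} gives $(\partial_t+v\cdot\nabla_x+1)h = \varrho_f M_f - \varrho_{g^\eps}^{\varphi_\eps} M_{g^\eps}^{\varphi_\eps}$, with $h(0)=0$. Integrating the associated Duhamel formula and using translation invariance of Lebesgue measure in $x$, one gets
\[
\Delta(t) \le \int_0^t\!\rmd s\,\rme^{-(t-s)}\!\!\int\!\rmd x\,\rmd v\,(1+|v|^2)\,\big|\varrho_f M_f - \varrho_{g^\eps}^{\varphi_\eps} M_{g^\eps}^{\varphi_\eps}\big|(x,v,s),
\]
and the plan is to bound the inner integral by $C\eps+C\Delta(s)$ and close via Gronwall.

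I insert the intermediate quantity $\varrho_f^{\varphi_\eps} M_f^{\varphi_\eps}$ built from $f$ through its smeared fields $\varrho_f^{\varphi_\eps},u_f^{\varphi_\eps},T_f^{\varphi_\eps}$ (defined in analogy with \eqref{eq:rphi}--\eqref{eq:tphi}, with $M_f^{\varphi_\eps}:=M_{u_f^{\varphi_\eps},T_f^{\varphi_\eps}}$), thereby splitting the integrand into a \emph{regularization error} $\varrho_f M_f - \varrho_f^{\varphi_\eps} M_f^{\varphi_\eps}$ and a \emph{stability term} $\varrho_f^{\varphi_\eps} M_f^{\varphi_\eps} - \varrho_{g^\eps}^{\varphi_\eps} M_{g^\eps}^{\varphi_\eps}$. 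For the regularization error one needs $x$-regularity of the hydrodynamical fields of $f$, which is where assumption \eqref{dg0} enters: differentiating the Duhamel formula for $f$ in $x$ and running a Gronwall argument in the spirit of \cite{PP} delivers a bound $\mc N_q(|\nabla_x f(t)|)\le \widetilde K_{q,t}$ on any bounded time interval. Combined with the lower bounds $\varrho_f \ge C_2\rme^{-t}$ and $T_f \ge A_t$ of Proposition \ref{prop:bgk}, this yields Lipschitz $x$-regularity of $\varrho_f,u_f,T_f$; then \eqref{stim2} gives $\|\varrho_f-\varrho_f^{\varphi_\eps}\|_\infty+\|u_f-u_f^{\varphi_\eps}\|_\infty+\|T_f-T_f^{\varphi_\eps}\|_\infty\le C\eps$, and a routine Lipschitz estimate on the Gaussian $M_{u,T}$ in its parameters (valid when $|u|,T$ are bounded above and $T$ is bounded below) produces $\int(1+|v|^2)|\varrho_f M_f - \varrho_f^{\varphi_\eps} M_f^{\varphi_\eps}|\,\rmd x\,\rmd v\le C\eps$.

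For the stability term I decompose further as $(\varrho_f^{\varphi_\eps}-\varrho_{g^\eps}^{\varphi_\eps})M_f^{\varphi_\eps} + \varrho_{g^\eps}^{\varphi_\eps}(M_f^{\varphi_\eps}-M_{g^\eps}^{\varphi_\eps})$. Since $\varphi_\eps$ has unit mass, convolution is contractive in $L^1(\bb T^d)$, so $\|\varrho_f^{\varphi_\eps}-\varrho_{g^\eps}^{\varphi_\eps}\|_{L^1}\le \|\varrho_f-\varrho_{g^\eps}\|_{L^1}\le\Delta(s)$, and the identities $\varrho^{\varphi_\eps}u^{\varphi_\eps}=\varphi_\eps*(\varrho u)$ and $\varrho^{\varphi_\eps}(dT^{\varphi_\eps}+|u^{\varphi_\eps}|^2)=\varphi_\eps*(\varrho(dT+|u|^2))$, combined with the uniform lower bound $\varrho^{\varphi_\eps}\ge C_2\rme^{-s}$ of Proposition \ref{prop:stim_uT}, give $\int\varrho_{g^\eps}^{\varphi_\eps}(|u_f^{\varphi_\eps}-u_{g^\eps}^{\varphi_\eps}|+|T_f^{\varphi_\eps}-T_{g^\eps}^{\varphi_\eps}|)\,\rmd x\le C\Delta(s)$. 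The Maxwellian Lipschitz lemma then closes the bound on the second piece, while the first piece is controlled directly using the $L^\infty$ bounds on $T_f^{\varphi_\eps}$ and $|u_f^{\varphi_\eps}|$ from Proposition \ref{prop:stim_uT}. A single Gronwall inequality delivers $\Delta(t)\le C\eps\,\rme^{Ct}$, i.e.\ \eqref{eq:f-g}. The main technical obstacle I foresee is the propagation of the $\nabla_x$ estimate for $f$, needed only to quantify the regularization error; everything else reduces to Gaussian-parameter Lipschitz estimates and one application of Gronwall.
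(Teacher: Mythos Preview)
Your proposal is correct and follows the same overall scheme as the paper (Duhamel, Lipschitz dependence of $M_{u,T}$ on its parameters, propagation of a gradient bound, Gronwall), but you organize the comparison differently. The paper does not introduce the intermediate object $\varrho_f^{\varphi_\eps}M_f^{\varphi_\eps}$; instead it interpolates directly, setting $(\varrho_\lambda,u_\lambda,T_\lambda)=\lambda(\varrho_f,u_f,T_f)+(1-\lambda)(\varrho^\eps,u^\eps,T^\eps)$ with $M_\lambda=M_{u_\lambda,T_\lambda}$, and bounds $\int(1+|v|^2)|\varrho_fM_f-\varrho^\eps M_{g^\eps}|\,\rmd v$ by $C(|\varrho_f-\varrho^\eps|+|u_f-u^\eps|+|T_f-T^\eps|)$ pointwise in $x$. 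The $\eps$-term then enters only at the very end, when $|\varrho_f-\varrho^\eps|$ is split as $\int|f-g^\eps|\,\rmd v+\int\varphi_\eps(x-y)|g^\eps(y,v)-g^\eps(x,v)|\,\rmd y\,\rmd v$; this last piece is $O(\eps)$ by the gradient bound on $g^\eps$. So the paper actually uses $\mc N_q(|\nabla_x g^\eps|)$ (uniform in $\eps$), whereas your decomposition localizes the regularization error on the $f$ side and only needs $\mc N_q(|\nabla_x f|)$. The paper's Lemma~\ref{lem:dgf} in fact proves both bounds simultaneously, so neither route is materially harder; yours is arguably cleaner in that the $\eps$-independence of the gradient estimate for $g^\eps$ becomes irrelevant.

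One small inaccuracy: you invoke Proposition~\ref{prop:stim_uT} for the upper bounds on $|u_f^{\varphi_\eps}|$, $T_f^{\varphi_\eps}$ and the lower bounds on $\varrho_f^{\varphi_\eps}$, $T_f^{\varphi_\eps}$, but that proposition is about the smeared fields of the solution $g$ to the \emph{regularized} equation, not about the smeared fields of the BGK solution $f$. The bounds you need do hold---they follow from Proposition~\ref{prop:bgk} since $\varrho_f^{\varphi_\eps},u_f^{\varphi_\eps},T_f^{\varphi_\eps}$ are the hydrodynamical fields of $f^{\varphi_\eps}=\varphi_\eps*f$ and $\mc N_q(f^{\varphi_\eps})\le\mc N_q(f)$, $f^{\varphi_\eps}\ge\rme^{-t}a(v)$---but the citation should be adjusted.
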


It is now easy to construct the particle approximation to the BGK equation. First of all, we observe that, in view of Eq.~\eqref{Cphi}, $\Gamma_{\varphi_\eps} \le C \eps^{-\eta}$ with $\eta=10(d+1)$ for any $\eps$ small. Then, let $\tilde Z_N(t)$ be the process constructed with smearing function $\varphi= \varphi_{\eps_N}$, $\eps_N\to 0$ to be chosen. Suppose also that the law $\tilde F_N(t)$ of $\tilde Z_N(t)$ has initial value $\tilde F_N(0) = f_0^{\otimes N}$, where $f_0$ satisfies the assumptions of Theorem \ref{thm:fgd}. Let $f=f(t)=f(x,v,t)$ be the solution to Eq.~\eqref{eq:bgk} with initial condition $f(0)=f_0$. Then, from Theorem \ref{teo:main} and Theorem \ref{thm:fgd}, for any $N$ large enough we have,
\[
\begin{split}
\mc W_2\big(\tilde f_j^N(t),f(t)^{\otimes j}\big) & \le \mc W_2\big(\tilde f_j^N(t),g^{\eps_N}(t)^{\otimes j}\big) + \mc  W_2\big(g^{\eps_N}(t)^{\otimes j},f(t)^{\otimes j}\big) \\ & \le C\sqrt{\frac jN}\eps_N^{-\eta} \exp (C\eps_N^{-\eta}) + jC\eps_N\,.
\end{split}
\]
From this estimate we deduce the aimed result, which is the content of the following theorem.

\begin{theorem}
\label{teo:main1}
With the above notation, choose $\eps_N = (\log N)^{-1/\gamma}$ with $\gamma>\eta$. Then, letting $\tilde f_j^N(t)$, $j\in \{1, \ldots, N\}$, be the $j$-particle marginal distribution function of the (symmetric) law $\tilde F_N(t)$, the 2-Wasserstein distance $\mc W_2\big(\tilde f_j^N(t),f(t)^{\otimes j}\big)$ vanishes as $N\to +\infty$ for any $j\in \bb N$ and $t \ge 0$. In particular, the one particle marginal distribution function $\tilde f_1^N(t)$ weakly converges to $f(t)$ as $N\to +\infty$ for any $t \ge 0$.
\end{theorem}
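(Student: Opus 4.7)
The plan is to simply combine the two key estimates already assembled in the paragraph preceding the statement, namely
\[
\mc W_2\bigl(\tilde f_j^N(t),f(t)^{\otimes j}\bigr) \le C\sqrt{\tfrac{j}{N}}\,\eps_N^{-\eta}\exp\bigl(C\eps_N^{-\eta}\bigr) + jC\eps_N,
\]
which follows from a triangle inequality in $\mc W_2$, Theorem~\ref{teo:main} applied with $\varphi=\varphi_{\eps_N}$ (together with the observation that $\Gamma_{\varphi_\eps}\le C\eps^{-\eta}$ with $\eta=10(d+1)$ by Eqs.~\eqref{Cphi} and~\eqref{stim1}), and Theorem~\ref{thm:fgd} applied coordinatewise to control $\mc W_2\bigl(g^{\eps_N}(t)^{\otimes j},f(t)^{\otimes j}\bigr)$ (for the latter, one uses that $\mc W_2$ of product measures is bounded by $\sqrt{j}$ times the single-factor distance, and that the $L^1$ bound with weight $1+|v|^2$ from Eq.~\eqref{eq:f-g} dominates $\mc W_2^2$ via the standard comparison between weighted total variation and 2-Wasserstein when the second moments are uniformly controlled, which is ensured by Propositions~\ref{prop:bgk} and~\ref{prop:stim_uT}).

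Once this estimate is in hand, the proof reduces to a one-line asymptotic check. Inserting $\eps_N=(\log N)^{-1/\gamma}$ with $\gamma>\eta$, the second summand becomes $jC(\log N)^{-1/\gamma}\to 0$. For the first summand, $\eps_N^{-\eta}=(\log N)^{\eta/\gamma}$; since $\eta/\gamma<1$, for any $\delta>0$ and $N$ large enough we have $C(\log N)^{\eta/\gamma}\le \delta \log N$, hence $\exp(C\eps_N^{-\eta})\le N^{\delta}$. Choosing $\delta<1/2$ (say), the first summand is bounded by $C\sqrt{j}\,N^{-1/2+\delta}(\log N)^{\eta/\gamma}\to 0$ for each fixed $j$. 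Thus $\mc W_2\bigl(\tilde f_j^N(t),f(t)^{\otimes j}\bigr)\to 0$ as $N\to\infty$.

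Finally, the weak convergence statement for the one-particle marginal follows from the $j=1$ case together with the well-known fact that convergence in $\mc W_2$ implies weak convergence (and convergence of second moments). Since no fresh analytic input is needed—the real work is encapsulated in Theorems~\ref{teo:main} and~\ref{thm:fgd}—there is no genuine obstacle in this proof; the only point requiring attention is the choice of $\gamma>\eta$, which is precisely what makes the super-polynomial factor $\exp(C\eps_N^{-\eta})$ grow slower than any positive power of $N$ and hence be absorbed by the $1/N$ factor from the propagation-of-chaos estimate. The condition $\gamma>\eta$ is in fact dictated by this balance, and one could track explicit quantitative rates of convergence by optimizing $\delta$, though the qualitative statement is all that the theorem asserts.
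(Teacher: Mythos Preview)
Your proposal is correct and follows exactly the approach of the paper. The paper's proof is precisely the displayed inequality derived in the paragraph preceding the theorem (triangle inequality plus Theorems~\ref{teo:main} and~\ref{thm:fgd}), after which it simply asserts ``From this estimate we deduce the aimed result''; you have merely made the implicit asymptotic check explicit, correctly observing that $\gamma>\eta$ forces $\exp(C\eps_N^{-\eta})=\exp\bigl(C(\log N)^{\eta/\gamma}\bigr)\le N^\delta$ for any $\delta>0$ and large $N$, so that the $N^{-1/2}$ factor dominates.
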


\section{Particle approximation of the regularized BGK equation}
\label{sec:3}

In this section, we prove Theorem \ref{teo:main}. For reader convenience, the section is divided in several subsections corresponding to the different steps of the proof.

\subsection{Heuristics}
\label{sec:3.1}

Because of the mean field character of the interaction among the particles, we expect that the propagation of chaos property holds as the size $N$ of the system tends to infinity. We claim that if this is true then the one particle marginal distribution function $f_1^N = f_1^N(t) = f_1^N(x,v,t)$ of the law $F_N(t)$ does converge to the solution to Eq.~\eqref{eq:kin}. Indeed, from Eq.~\eqref{gen},
\[
\begin{split}
& \frac{\rmd}{\rmd t} \int\! \rmd x_1\, \rmd v_1\, f_1^N(x_1,v_1,t) \psi(x_1,v_1) = \frac{\rmd}{\rmd t} \int\! \rmd Z_N\, F_N(Z_N,t) \psi(x_1,v_1) \\ & \quad = \int\! \rmd x_1\, \rmd v_1\, f_1^N(x_1,v_1,t) (v_1 \cdot \nabla_{x_1}-1) \psi(x_1,v_1) \\ & \qquad + \int\! \rmd Z_N\, F_N(Z_N,t) \int\! \rmd \tilde x_1 \, \rmd \tilde v_1\, \varphi(\tilde x_1 - x_1) M_{Z_N}^\varphi (\tilde x_1,\tilde v_1) \psi (\tilde x_1, \tilde v_1 )\,,
\end{split}
\]
where $\psi$ is a test function  on the one-particle state space. Now, due to the law of large numbers, if $Z_N$ is distributed according to $F_N(t) \approx f_1^N(t)^{\otimes N}$ then
\[
\frac 1N \sum_i \delta (x-x_i)\delta(v-v_i) \approx f_1^N(x,v) \quad (\text{weakly})\,,
\]
whence
\[
M_{Z_N}^\varphi (\tilde x_1,\tilde v_1) \approx  M_{f_1^N}^\varphi (\tilde x_1,\tilde v_1)\,.
\]
Therefore,
\[
\begin{split}
& \int\! \rmd Z_N\, F_N(Z_N,t) \int\! \rmd \tilde x_1 \, \rmd \tilde v_1\, \varphi(\tilde x_1 - x_1) M_{Z_N}^\varphi (\tilde x_1,\tilde v_1) \psi (\tilde x_1, \tilde v_1 ) \\ & \qquad \approx \int\! \rmd x_1\, \rmd v_1\,  f_1^N(x_1,v_1,t) \int\! \rmd \tilde x_1 \, \rmd \tilde v_1\,\varphi(\tilde x_1 - x_1) M_{f_1^N}^\varphi (\tilde x_1,\tilde v_1) \psi (\tilde x_1, \tilde v_1) \\ & \qquad = \int\! \rmd \tilde x_1 \, \rmd \tilde v_1\, \varrho_{f_1^N}^\varphi(\tilde x_1)M_{f_1^N}^\varphi (\tilde x_1,\tilde v_1) \psi (\tilde x_1, \tilde v_1)\,, 
\end{split}
\]
and the claim follows.

Our purpose, Theorem \ref{teo:main}, is to prove rigorously this fact. This will be achieved by showing that the dynamics remains close to an auxiliary $N$-particle process, constituted by $N$ independent copies of the non-linear jump process associated to the kinetic equation.\footnote{This process is called non-linear since its generator is implicitly defined through the law of the process itself, see Eq.~\eqref{nonlin} further on.} The thesis of the theorem then follows by applying the law of large numbers to the auxiliary process.

\subsection{Coupling}
\label{sec:3.2}

The auxiliary process, named $\Sigma_N(t) = (Y_N(t),W_N(t)) \in (\bb T^d)^N\times (\bb R^d)^N$, is defined according to the following construction.

Let $g=g(t) = g(x,v,t)$ be as in Theorem \ref{teo:main} and denote by $(x(t),v(t))\in \bb T^d\times \bb R^d$ the one-particle jump process whose generator is given by
\begin{equation}
\label{nonlin}
\mc L_1^g\psi(x,v) = [(v\cdot \nabla_x -1)\psi](x,v) + \int\! \rmd \tilde x \, \rmd \tilde v\, \varphi(\tilde x - x) M_g^\varphi(\tilde x,\tilde v) \psi(\tilde x,\tilde v)\,,
\end{equation}
where $\psi$ is a test function and $M_g^\varphi$ is defined in Eq.~\eqref{eq:maxg}. We remark that if the initial distribution has a density then the same holds at positive time and for the probability density of $(x(t),v(t))$ solves the regularized BGK equation \eqref{eq:kin}.The auxiliary $N$-particle process $\Sigma_N(t)$ is then defined by $N$ independent copies of the above process. Otherwise stated, it is the Markovian dynamics on $(\bb T^d)^N\times (\bb R^d)^N$ with generator
\begin{align}
\label{genf}
\mc L_N^g{G}(Z_N) & = [(V_N\cdot\nabla_{X_N} -N){G}] (Z_N) \nonumber \\ & \quad + \sum _{i=1}^N \int\! \rmd \tilde x_i \,\rmd \tilde v_i\, \varphi(\tilde x_i - x_i) M_g^\varphi (\tilde x_i,\tilde v_i) {G} (Z_N^{i,(\tilde x_i,\tilde v_i)})\,.
\end{align}
We emphasize on the fact that the only difference w.r.t.\ Eq.~\eqref{gen} is that $M_{Z_N}^\varphi$ has been replaced by $M^\varphi_g$.

In proving the closeness between $Z_N(t)$ and $\Sigma_N(t)$ we find convenient to introduce the coupled process $Q_N(t) = (Z_N(t),\Sigma_N(t))$ given by the Markov process whose generator $\mc L_Q$ is defined in the following way. Denoting $Z_N=(X_N,V_N)$, $\Sigma_N=(Y_N,W_N)$, with $X_N=(x_1, \dots, x_N)$, $V_N=(v_1, \dots, v_N)$, $Y_N=(y_1, \dots, y_N)$, and $W_N=(w_1, \dots, w_N)$, and letting $G=G(Z_N,\Sigma_N)$ a test function, we set
\begin{align}
\label{genQ}
\mc L_Q{G}(Z_N,\Sigma_N) & = [(V_N\cdot\nabla_{X_N} + W_N\cdot\nabla_{Y_N}  - N){G}] (Z_N,\Sigma_N) \nonumber \\ & \quad + \sum _{i=1}^N \int\! \rmd \tilde x_i\, \rmd \tilde v_i\, \rmd \tilde y_i \, \rmd \tilde w_i \, \Phi_{x_i,y_i}(\tilde x_i,\tilde y_i) \nonumber  \\ & \qquad \qquad \times \mc M^\varphi (\tilde x_i,\tilde v_i;\tilde y_i,\tilde w_i) {G} (Z_N^{i,(\tilde x_i,\tilde v_i)},\Sigma_N^{i,(\tilde y_i,\tilde w_i)})\,,
\end{align}
where, for given $\tilde x,\tilde y\in \bb T^d$, $\mc M^\varphi(\tilde x,v;\tilde y,w)$ is a joint representation (to be fixed later on) of the Maxwellians $M_{Z_N}^\varphi(\tilde x,v)$ and $M_g^\varphi(\tilde y,w)$, and, for given $x,y\in \bb T^d$, $\Phi_{x,y}(\tilde x,\tilde y)$ is the joint representation of the probability densities $\varphi_x(\tilde x) = \varphi(\tilde x-x)$ and  $\varphi_y(\tilde y) = \varphi(\tilde y-y)$ defined as
\begin{equation}
\label{jointxy0}
\Phi_{x,y}(\tilde x,\tilde y) = \varphi_x(\tilde x) \delta(\tilde x -x - \tilde y +y)\,,
\end{equation}
where $\delta(x)$ denoted the Dirac measure on $\bb T^d$ centered in $x=0$. We remark that in particular, for any integrable function $J$ on $\bb T^d$,
\begin{equation}
\label{jointxy}
\int\! \rmd\tilde x\, \rmd\tilde y\, \Phi_{x,y}(\tilde x,\tilde y) J(\tilde x-\tilde y) = J(x-y)\,.
\end{equation}
In term of process, $(Z_N,\Sigma_N)$ performs jumps at random Poisson time of intensity $N$: at each jump time, $i$ is chosen uniformly and $(x_i,v_i,y_i,w_i) \to (x_i+\xi,y_i+\xi,\tilde v_i, \tilde w_i)$, where the common position jump $\xi$ is distributed according to $\varphi$ and $(\tilde v_i, \tilde w_i)$ according to the aforementioned joint representation of the two Maxwellians $M^\varphi_{Z_N}(\tilde x_i, \cdot)$  and $M^\varphi_g(\tilde y_i,\cdot)$ to be specified later.

Let now $R_N(t) = R_N(Z_N,\Sigma_N,t)$ be the law of $Q_N(t)$ and assume that, initially,
\[
R_N(0) = \delta(X_N-Y_N) \delta(V_N-W_N) f_0^{\otimes N}(X_N,V_N)\,.
\]
Then, setting
\[
I_N(t) := \int\! \rmd R_N(t) (|x_1-y_1|^2 + |v_1-w_1 |^2)
\]
and noticing that, as $R_N(t)$ is symmetric with respect to particle permutations,
\[
I_N(t) = \frac 1j \int\! \rmd R_N(t) \sum_{i=1}^j (|x_i-y_i|^2 + |v_i-w_i |^2) \qquad \forall\, j\in\{1,\ldots,N\}\,,
\]
the proof of Theorem \ref{teo:main} reduces to show that
\begin{equation}
\label{I_N}
I_N(t) \le \frac{C\Gamma_\varphi}N \exp (C\Gamma_\varphi) \,.
\end{equation}
Indeed, from the definition of the 2-Wasserstein distance it follows immediately that $\mc W_2\big(f_j^N(t),g(t)^{\otimes j}\big) \le \sqrt{j I_N(t)}$.

To prove Eq.~\eqref{I_N} we compute,
\begin{align*}
& \displaystyle \frac{\rmd}{\rmd t} I_N(t)  =   \int\! \rmd R_N(t)\, \mc L_Q (|x_1-y_1|^2 + |v_1-w_1 |^2) \\ & =  \int\! \rmd R_N(t)\, (v_1 \cdot \nabla_{x_1} + w_1 \cdot \nabla_{y_1})  |x_1-y_1|^2 \\ & \quad - N \int\! \rmd R_N(t)\,  ( |x_1-y_1|^2 + |v_1-w_1|^2) \\ &  \quad + \sum_{i=2}^N \int\! \rmd R_N(t)\, (|x_1-y_1|^2 + |v_1-w_1|^2) + \int\! \rmd R_N(t)\, |x_1-y_1|^2 \\ & \quad + \int\! \rmd R_N(t) \int\! \rmd \xi\, \varphi(\xi) \int\! \rmd \tilde v_1\, \rmd \tilde w_1\, \mc M^\varphi(x_1+\xi, \tilde v_1; y_1+\xi, \tilde w_1) |\tilde v_1- \tilde w_1|^2\,. 
\end{align*}
Here, the first two terms in the right-hand side arise from the stream part ($V_N\cdot\nabla_{X_N}G + W_N\cdot\nabla_{Y_N}G$) and the loss part ($-NG$) of the generator $\mc L_Q$, respectively. The loss part is largely compensated by the third term, which is the sum over all the particles but particle $1$, see the last term in the right-hand side of  Eq.~\eqref{genQ}. The last two terms are those arising from the remaining term $i=1$, separating the position  and velocity contributions and having used Eq.~\eqref{jointxy} in the former and Eq.~\eqref{jointxy0} in the latter.

We observe that the stream part is equal to
\[
2 \int\! \rmd R_N(t)\, (v_1-w_1) \cdot (x_1-y_1) \le \int\! \rmd R_N(t)\, (|x_1-y_1|^2+|v_1-w_1 |^2)\,,
\]
while, concerning the last term, we choose $\mc M^\varphi$ the optimal coupling that realizes the 2-Wasserstein distance between the marginals, whose square is given by (see, e.g., \cite{OP})
\[
\mc W_2\big(M_{Z_N}^\varphi(x,\cdot),M_g^\varphi(y,\cdot)\big)^2 = |u_N^\varphi(x)-u_g^\varphi(y)|^2 + d\Big|\sqrt{T_N^\varphi(x)} - \sqrt{T_g^\varphi(y)}\Big|^2\,.
\]

Collecting together the above formulas, we find that
\begin{equation}
\label{I<1}
\frac{\rmd}{\rmd t} I_N(t) \le I_N(t) + \int\! \rmd R_N(t)\, D(Z_N,\Sigma_N)\,,
\end{equation}
where
\begin{align}
\label{w2}
D(Z_N,\Sigma_N) & = \int\! \rmd \xi\, \varphi(\xi)\, |u_N^\varphi(x_1+\xi)-u_g^\varphi(y_1+\xi)|^2 \nonumber \\ & \quad + \int\! \rmd \xi\, \varphi(\xi)\, d\Big|\sqrt{T_N^\varphi(x_1+\xi)} - \sqrt{T_g^\varphi(y_1+\xi)}\Big|^2.
\end{align}

Our goal is to estimate from above the integral in the right-hand side of Eq.~\eqref{I<1} with a constant (independent of $N$)  multiple of $I_N(t)$ plus a small (order $1/N$) term. Then, Eq.~\eqref{I_N} will follow from Gr\"{o}nwall's inequality.

\subsection{Estimates}
\label{sec:3.3}

In estimating the function $D$ defined in Eq.~\eqref{w2}, it is convenient to replace $\varrho_g^\varphi$, $u_g^\varphi$, $T_g^\varphi$ with the fields $\tilde \varrho_N^\varphi$, $\tilde u_N^\varphi$, $\tilde T_N^\varphi$ given by
\[
\begin{split}
& \tilde \varrho_N^\varphi(x) = \frac 1N \sum_{j=1}^N \varphi (x-y_j)\,, \qquad  \tilde \varrho_N^\varphi \tilde u_N^\varphi(x) = \frac 1N \sum_{j=1}^N \varphi (x-y_j) w_j\,, \\ & \tilde \varrho_N^\varphi \tilde T_N^\varphi(x) = \frac 1{Nd} \sum_{j=1}^N \varphi (x-y_j) |w_j - \tilde u_N^\varphi(x)|^2 \,,
\end{split}
\]
i.e., the empirical fields constructed via the variables $Y_N=(y_1, \dots, y_N) $ and $W_N=(w_1, \dots, w_N)$, distributed independently according to $g(t)^{\otimes N}$. By the law of large numbers, the error due to this replacement in estimating the right-hand side of Eq.~\eqref{w2} will be shown to be small (order $1/N$).

More precisely, since from Eq.~\eqref{eq:TA},
\begin{align*}
\Big|\sqrt{T_N^\varphi(x)} - \sqrt{T_g^\varphi(y)}\Big|  & =  \frac{\big|T_N^\varphi(x) - T_g^\varphi(y)\big|}{\sqrt{T_N^\varphi(x)} +\sqrt{T_g^\varphi(y)}} \le \frac{\big|T_N^\varphi(x) - \tilde T_N^\varphi(y)\big|}{\sqrt{T_N^\varphi(x)} + \sqrt{A_t}} \\ & \quad + \frac{\big| \tilde T_N^\varphi(y) - T_g^\varphi(y)\big|}{\sqrt{A_t}}\,,
\end{align*}
we have,
\begin{equation}
\label{w2<}
D(Z_N,\Sigma_N) \le D_1(Z_N,\Sigma_N) + \mc E(\Sigma_N)\,,
\end{equation}
where
\begin{align}
\label{D1}
D_1(Z_N,\Sigma_N) & = \int\! \rmd \xi\, \varphi(\xi) \, 2 |u_N^\varphi(x_1+\xi) - \tilde u_N^\varphi(y_1+\xi)|^2 \nonumber \\ & \quad + \int\! \rmd \xi\, \varphi(\xi) \,2d \Bigg|\frac{T_N^\varphi(x_1+\xi) - \tilde T_N^\varphi(y_1+\xi)}{\sqrt{T_N^\varphi(x_1+\xi)} + \sqrt{A_t}}\Bigg|^2
\end{align}
and
\begin{align}
\label{E}
\mc E(\Sigma_N) & =  \int\! \rmd \xi\, \varphi(\xi) \, 2 |\tilde u_N^\varphi(y_1+\xi) - u_g^\varphi(y_1+\xi)|^2 \\ & \quad + \int\! \rmd \xi\, \varphi(\xi) \, 2d \frac{\big|  \tilde T_N^\varphi(y_1+\xi) - T_g^\varphi(y_1+\xi)\big|^2 }{A_t}\,.
\end{align}

\begin{lemma}
\label{lem:esti1}
Recall the definition of $\Gamma_\varphi$ in Eq.~\eqref{Cphi}. Then, for any $t\ge 0$,
\begin{align}
\label{D1<}
& D_1(Z_N,\Sigma_N) \nonumber \\ & \le C \Gamma_\varphi \left(1+ \frac 1N {\sum}_j |w_j|^4 \right)\bigg(\frac{|X_N-Y_N|^2+|V_N-W_N|^2}N +|x_1-y_1|^2\bigg)\,.
\end{align}
\end{lemma}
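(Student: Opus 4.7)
The plan is to prove the estimate pointwise in $\xi$, exploiting that the change of variables $x=x_1+\xi$, $y=y_1+\xi$ gives $|x-y|^2=|x_1-y_1|^2$, so that integrating against $\varphi(\xi)\,\rmd\xi$ (with $\int\varphi=1$) yields precisely the form stated in the lemma. Throughout I will systematically use:
(i) the lower bound $\varrho_N^\varphi(x),\tilde\varrho_N^\varphi(y)\ge C_\varphi^{-1}$;
(ii) the Lipschitz bound $|\varphi(x-x_j)-\varphi(y-y_j)|\le \|\nabla\varphi\|_\infty(|x-y|+|x_j-y_j|)$;
(iii) the elementary inequalities $|v_j|^2\le 2|w_j|^2+2|v_j-w_j|^2$ and $a^2\le 1+a^4$, and
(iv) the Cauchy--Schwarz-derived bound $|\tilde u_N^\varphi(y)|^2\le C\,C_\varphi\|\varphi\|_\infty(1+\tfrac{1}{N}\sum_j|w_j|^4)$.

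For the velocity term, I would write
\[
u_N^\varphi(x)-\tilde u_N^\varphi(y)=\frac{1}{\varrho_N^\varphi(x)}\bigl[N_1(x)-\tilde N_1(y)\bigr]+\tilde u_N^\varphi(y)\frac{\tilde\varrho_N^\varphi(y)-\varrho_N^\varphi(x)}{\varrho_N^\varphi(x)},
\]
where $N_1(x)=\frac1N\sum_j\varphi(x-x_j)v_j$, and further split $N_1(x)-\tilde N_1(y)=\frac1N\sum_j\varphi(x-x_j)(v_j-w_j)+\frac1N\sum_j[\varphi(x-x_j)-\varphi(y-y_j)]w_j$. Applying Cauchy--Schwarz to each of the three resulting pieces and using (i)--(iv) gives
\[
|u_N^\varphi(x)-\tilde u_N^\varphi(y)|^2\le C\Gamma_\varphi\Bigl(1+\tfrac{1}{N}\textstyle\sum_j|w_j|^4\Bigr)\Bigl(|x-y|^2+\tfrac{|X_N-Y_N|^2+|V_N-W_N|^2}{N}\Bigr),
\]
where the various powers $C_\varphi^2, \|\varphi\|_\infty^2, \|\nabla\varphi\|_\infty^2$ produced along the way are absorbed into $\Gamma_\varphi$ by its generous definition.

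For the temperature term I plan to factor the difference through its square root, writing
\[
\frac{|T_N^\varphi(x)-\tilde T_N^\varphi(y)|^2}{(\sqrt{T_N^\varphi(x)}+\sqrt{A_t})^2}\le 2\Bigl(1+\frac{\tilde T_N^\varphi(y)}{A_t}\Bigr)\bigl|\sqrt{T_N^\varphi(x)}-\sqrt{\tilde T_N^\varphi(y)}\bigr|^2,
\]
then use $\tilde T_N^\varphi(y)\le C\,C_\varphi\|\varphi\|_\infty\cdot\tfrac1N\sum_j|w_j|^2\le C(1+\tfrac{1}{N}\sum_j|w_j|^4)$ to reduce the problem to estimating $|\sqrt{T_N^\varphi(x)}-\sqrt{\tilde T_N^\varphi(y)}|^2$. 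For this last quantity I would use the minimax characterization $T_N^\varphi(x)=\min_u\frac{1}{Nd\varrho_N^\varphi(x)}\sum_j\varphi(x-x_j)|v_j-u|^2$ (testing with $u=\tilde u_N^\varphi(y)$ from one side and $u=u_N^\varphi(x)$ from the other), which removes the extra $|u_N^\varphi|^2$ term and reduces the estimate to the same algebraic structure as Step~1, but applied to the centered quantities $v_j-\tilde u_N^\varphi(y)$. The needed bounds on the fourth moments of $v_j-\tilde u_N^\varphi(y)$ are again controlled by $1+\tfrac{1}{N}\sum_j|w_j|^4$ and Step~1.

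The main obstacle lies in Step~2: in the naive decomposition of $|T_N^\varphi-\tilde T_N^\varphi|^2$, a term of the form $\bigl(\tfrac{|V_N-W_N|^2}{N}\bigr)^2$ appears from the velocity-squared difference $|v_j|^2-|w_j|^2=(v_j-w_j)\cdot(v_j+w_j)$ after Cauchy--Schwarz. Resolving this requires using the centered representation of the empirical temperature together with the bound $\frac1N\sum_j\varphi(x-x_j)|v_j+w_j|^2\le 2d\|\varphi\|_\infty T_N^\varphi(x)+C\|\varphi\|_\infty(\tfrac{1}{N}\sum|w_j|^2+|u_N^\varphi(x)|^2)$, so that the unwanted $(\tfrac{|V_N-W_N|^2}{N})^2$ contribution comes paired with an extra $T_N^\varphi(x)$ factor that is then absorbed by the denominator $\sqrt{T_N^\varphi(x)}+\sqrt{A_t}$. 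Once this cancellation is carried out, all remaining constants are collected into $C\Gamma_\varphi$ (with $1/A_t$ absorbed into the generic time-dependent constant $C$ per the notation convention), yielding the stated bound after integration in $\xi$.
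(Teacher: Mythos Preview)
Your velocity estimate (Step~1) is fine and essentially equivalent to the paper's, just written with numerators and densities rather than the normalized weights $p_j,q_j$.

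The temperature part, however, has a genuine gap. Your two proposed routes both leave an uncontrolled term:

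\emph{The minimax route (third paragraph).} After you pass from $\dfrac{|T-\tilde T|}{\sqrt{T}+\sqrt{A_t}}$ to $\bigl|\sqrt{T}-\sqrt{\tilde T}\bigr|$ you have discarded the crucial $A_t$ in the denominator, and the empirical $\tilde T_N^\varphi$ has no a~priori lower bound. More concretely, you claim that the fourth moments of $v_j-\tilde u_N^\varphi(y)$ are controlled by $1+\tfrac1N\sum_j|w_j|^4$; but $|v_j-\tilde u|^4\le 8|v_j-w_j|^4+8|w_j-\tilde u|^4$, and $\tfrac1N\sum_j|v_j-w_j|^4$ is \emph{not} bounded by anything appearing on the right-hand side of the lemma. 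The inequality is pointwise in $(Z_N,\Sigma_N)$, so no probabilistic control is available here.

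\emph{The $|v_j+w_j|^2$ route (fourth paragraph).} Your bound
\[
\frac1N\sum_j\varphi(x-x_j)|v_j+w_j|^2 \le 2d\|\varphi\|_\infty T_N^\varphi(x)+C\|\varphi\|_\infty\Bigl(\tfrac1N\textstyle\sum_j|w_j|^2+|u_N^\varphi(x)|^2\Bigr)
\]
is correct, but the last term $|u_N^\varphi|^2$ is itself uncontrolled: writing $|u_N^\varphi|^2\le 2|\tilde u_N^\varphi|^2+2|u_N^\varphi-\tilde u_N^\varphi|^2$ and invoking Step~1 reintroduces a factor $\tfrac{|V_N-W_N|^2}{N}$, so after multiplying by the outer $\tfrac{|V_N-W_N|^2}{N}$ you land back on the very $\bigl(\tfrac{|V_N-W_N|^2}{N}\bigr)^2$ you were trying to eliminate.

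The paper avoids both problems by keeping \emph{both} velocities centered and never passing to $|\sqrt{T}-\sqrt{\tilde T}|$. With $a_j=v_j-u_N^\varphi$ and $b_j=w_j-\tilde u_N^\varphi$ one has
\[
d\bigl(T_N^\varphi-\tilde T_N^\varphi\bigr)=\sum_j p_j(a_j-b_j)\cdot(a_j+b_j)+\sum_j(p_j-q_j)|b_j|^2.
\]
The ``difference'' factor $a_j-b_j=(v_j-w_j)-(u_N^\varphi-\tilde u_N^\varphi)$ is small. The ``sum'' factor is split as $a_j+b_j=(v_j-u_N^\varphi)+(w_j-\tilde u_N^\varphi)$; applying Cauchy--Schwarz in the weights $p_j$ to the first piece yields \emph{exactly} $\sqrt{d\,T_N^\varphi}$, which cancels against $\sqrt{T_N^\varphi}+\sqrt{A_t}$, while the second piece involves only $|w_j-\tilde u_N^\varphi|$ and is handled by $w$-moments. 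Nothing like $|u_N^\varphi|$ alone, $|v_j+w_j|$, or fourth powers of $v_j$ ever appears.
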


\begin{proof}
Before evaluating the difference between the empirical fields, we introduce the normalized weights,
\[
p_j = \frac{\varphi(x_1+\xi-x_j)}{\sum_k \varphi(x_1+\xi-x_k)}\,, \qquad q_j = \frac{\varphi(y_1+\xi-y_j)}{\sum_k \varphi(y_1+\xi-y_k)}.
\]
Recalling the definition of $C_\varphi$ in Eq.~\eqref{Cphi}, we have,
\begin{equation}
\label{pq1}
\max\{p_j; q_j\} \le \frac{C_\varphi\|\varphi\|_\infty}N\,.
\end{equation}
Moreover, since
\[
\begin{split}
p_j - q_j  & = \frac{\varphi(x_1+\xi-x_j) - \varphi(y_1+\xi-y_j)}{\sum_k \varphi(x_1+\xi-x_k)} \\ & \quad + \varphi(y_1+\xi-y_j) \frac{\sum_k [\varphi(x_1+\xi-x_k) - \varphi(y_1+\xi-y_k)]}{\sum_k \varphi(x_1+\xi-x_k)\sum_k \varphi(y_1+\xi-y_k)},
\end{split} 
\]
it also follows that
\begin{align}
|p_j-q_j| & \le \frac{C_\varphi\|\nabla\varphi\|_\infty}N (|x_1-y_1| + |x_j-y_j|) \nonumber \\ & \quad + \frac{C_\varphi^2\|\varphi\|_\infty \|\nabla \varphi\|_\infty}{N^2} \sum_k (|x_1-y_1| + |x_k-y_k|) \nonumber \\ & \le \frac{C_\varphi\|\nabla\varphi\|_\infty}N \big(1+C_\varphi\|\varphi\|_\infty\big) |x_1-y_1| + \frac{C_\varphi\|\nabla\varphi\|_\infty}N |x_j-y_j| \nonumber \\ \label{pq2} & \quad + \frac{C_\varphi^2\|\varphi\|_\infty \|\nabla \varphi\|_\infty}N \frac{|X_N-Y_N|}{\sqrt N}\,,
\end{align}
where in the last bound we used that, by the Cauchy-Schwarz inequality, $\sum_k |x_k-y_k|\le \sqrt N |X_N-Y_N|$.

Regarding the first term in the right-hand side of Eq.~\eqref{D1}, we notice that
\[
|u_N^\varphi(x_1+\xi) - \tilde u_N^\varphi(y_1+\xi)| \le U_1 + U_2\,,
\]
where, by Eqs.~\eqref{pq1} and \eqref{pq2},
\begin{align*}
U_1 & = \sum_j p_j |v_j-w_j|  \le \frac{C_\varphi\| \varphi\|_\infty} N \sum_j|v_j-w_j| \,, 
\\ U_2 & = \sum_j|p_j-q_j| |w_j| \le  \frac{C_\varphi\|\nabla\varphi\|_\infty}N \big(1+C_\varphi\|\varphi\|_\infty\big) |x_1-y_1|\sum_j |w_j| \\ & \quad + \frac{C_\varphi\|\nabla\varphi\|_\infty}N \sum_j |x_j-y_j| |w_j| +  C_\varphi^2\|\varphi\|_\infty \|\nabla \varphi\|_\infty \frac{|X_N-Y_N|}{\sqrt N} \frac 1N \sum_j|w_j|\,.
\end{align*}
Therefore, again by the Cauchy-Schwarz inequality, 
\begin{align}
\label{eq:uu}
& |u_N^\varphi(x_1+\xi) - \tilde u_N^\varphi(y_1+\xi)|^2 \le 2 U_1^2 + 2U_2^2 \le  2C_\varphi^2 \|\varphi\|_\infty^2 \frac{|V_N-W_N|^2}N \nonumber \\ & \quad + C (1+ C_\varphi^2\|\varphi\|_\infty^2) C_\varphi^2 \|\nabla \varphi\|_\infty^2 \frac{|W_N|^2}N \bigg(\frac{|X_N-Y_N|^2}N +|x_1-y_1|^2\bigg)\,.
\end{align}
The estimate of the second term in the right-hand side of Eq.~\eqref{D1} is more tricky and requires some effort. We first notice that
\[
\Bigg| \frac{T_N^\varphi(x_1+\xi) - \tilde T_N^\varphi(y_1+\xi)}{\sqrt{T_N^\varphi(x_1+\xi)} + \sqrt{A_t}}\Bigg| \le T_1 + T_2\,,
\]
where (omitting the explicit dependence on $x_1+\xi$ and $y_1+\xi$)
\[
T_1 = \frac 1d \sum_j p_j \Bigg|\frac{|v_j-u_N^\varphi|^2 - |w_j - \tilde u_N^\varphi|^2}{\sqrt{T_N^\varphi} + \sqrt{A_t}}\Bigg|\,, \qquad T_2 = \frac 1d \sum_j |p_j-q_j| \frac{|w_j - \tilde u_N^\varphi|^2}{\sqrt{A_t}}\,.
\]
Now, 
\[
\begin{split}
T_1 & = \frac 1d\sum_j p_j \Bigg|\frac{(v_j - u_N^\varphi - w_j + \tilde u_N^\varphi)\cdot (v_j - u_N^\varphi + w_j - \tilde u_N^\varphi)}{\sqrt{T_N^\varphi} + \sqrt{A_t}}\Bigg| \le T_{1,1} + T_{1,2}\,,
\end{split}
\]
with
\[
\begin{split}
T_{1,1} & = \frac 1d \sum_j p_j \frac{(|v_j -w_j| + |u_N^\varphi -\tilde u_N^\varphi|) |v_j-u_N^\varphi|}{\sqrt{T_N^\varphi} + \sqrt{A_t}}\,, \\
T_{1,2} & = \frac 1{d\sqrt{A_t}}\sum_j p_j (|v_j -w_j| + |u_N^\varphi -\tilde u_N^\varphi|) |w_j-\tilde u_N^\varphi|\,.
\end{split}
\]
By the Cauchy-Schwartz inequality with respect to the weights $\{p_j\}$ and Eq.~\eqref{pq1},
\begin{align*}
T_{1,1} & \le \frac 1d \sqrt{{\sum}_j p_j (|v_j -w_j| + |u_N^\varphi -\tilde u_N^\varphi|)^2}\, \frac{\sqrt{T_N^\varphi}}{\sqrt{T_N^\varphi} + \sqrt{A_t}} \\ & \le \frac 1d \sqrt{ \frac{2C_\varphi\|\varphi\|_\infty}N |V_N-W_N|^2 + 2|u_N^\varphi -\tilde u_N^\varphi|^2}
\end{align*}
and
\begin{align*}
& T_{1,2} \le \frac 1{d\sqrt{A_t}} \sqrt{{\sum}_j p_j (|v_j -w_j| + |u_N^\varphi -\tilde u_N^\varphi|)^2}\, \sqrt{{\sum}_j p_j |w_j - \tilde u_N^\varphi|^2} \\ & \le \frac 1{d\sqrt{A_t}} \sqrt{\frac{2C_\varphi\|\varphi\|_\infty}N |V_N-W_N|^2 + 2|u_N^\varphi -\tilde u_N^\varphi|^2}\, \sqrt{\frac{C_\varphi\|\varphi\|_\infty}N {\sum}_j |w_j-\tilde u_N^\varphi|^2}\,.
\end{align*}
On the other hand, in view of Eq.~\eqref{pq2} and by the Cauchy-Schwarz inequality, 
\begin{align*}
T_2 & \le \frac{C_\varphi\|\nabla\varphi\|_\infty}{d\sqrt{A_t} N} \sum_j \Big[\big(1+C_\varphi\|\varphi\|_\infty\big) |x_1-y_1|  +  |x_j-y_j|\Big] |w_j - \tilde u_N^\varphi|^2   \\ & \quad + \frac{C_\varphi^2\|\varphi\|_\infty \|\nabla \varphi\|_\infty}{d\sqrt{A_t}} \frac{|X_N-Y_N|}{\sqrt N} \frac 1N \sum_j|w_j - \tilde u_N^\varphi|^2 \\ & \le \frac{C_\varphi\|\nabla\varphi\|_\infty}{d\sqrt{A_t}}\left[ \big(1+C_\varphi\|\varphi\|_\infty\big) |x_1-y_1| + \frac{|X_N-Y_N|}{\sqrt N}\right] \sqrt{\frac 1N {\sum}_j|w_j - \tilde u_N^\varphi|^4}\\ & \quad + \frac{C_\varphi^2\|\varphi\|_\infty \|\nabla \varphi\|_\infty}{d\sqrt{A_t}} \frac{|X_N-Y_N|}{\sqrt N} \frac 1N \sum_j|w_j - \tilde u_N^\varphi|^2\,.
\end{align*}
We finally observe that, as $|\tilde u_N^\varphi| \le C_\varphi\|\varphi\|_\infty |W_N|/\sqrt N$,
\begin{align}
\label{eq:w4}
\frac 1N \sum_j|w_j - \tilde u_N^\varphi|^2 & \le 2(1+C_\varphi^2\|\varphi\|_\infty^2) \frac{|W_N|^2}N\,, \nonumber \\ \frac 1N \sum_j|w_j - \tilde u_N^\varphi|^4 & \le \frac 4N {\sum}_j |w_j|^4 + 4C_\varphi^4 \|\varphi\|_\infty^4 \frac{|W_N|^4}{N^2}\,.
\end{align}
From the above estimates on $T_{1,1}$, $T_{1,2}$, $T_2$, and inequalities Eqs.~\eqref{eq:uu} and \eqref{eq:w4}, we conclude that
\begin{align}
\label{eq:TT}
&\displaystyle \left|\frac{T_N^\varphi(x_1+\xi) - \tilde T_N^\varphi(y_1+\xi)}{\sqrt{T_N^\varphi(x_1+\xi)} + \sqrt{A_t}} \right|^2 \le  C\Gamma_\varphi \left(\frac{|W_N|^4}{N^2} + \frac 1N {\sum}_j |w_j|^4\right) |x_1-y_1|^2 \nonumber \\ & \quad \qquad + C\Gamma_\varphi \left(\frac{|W_N|^2}N + \frac{|W_N|^4}{N^2} + \frac 1N {\sum}_j |w_j|^4\right) \frac{|X_N-Y_N|^2}N\nonumber \\ & \quad \qquad +  C\Gamma_\varphi \left(1+\frac{|W_N|^2}N\right) \frac{|V_N-W_N|^2}N\,.
\end{align}
Since $\frac 1N |W_N|^2 \le \sqrt{\frac 1N {\sum}_j |w_j|^4}$, Eq.~\eqref{D1<} follows from Eqs.~\eqref{eq:uu} and \eqref{eq:TT} (recall $\int\!\rmd \xi\,\varphi(\xi) =1$).
\qed\end{proof}

\begin{lemma}
\label{lem:esti2}
Recall the definition of $\Gamma_\varphi$ in Eq.~\eqref{Cphi}. Then, for any $t\ge 0$,
\begin{equation}
\label{eq:w8}
\int\!\rmd R_N(t) \, \frac 1N \sum_j \big(|v_j|^8 + |w_j|^{16}\big) \le C \exp(C\Gamma_\varphi)\,.
\end{equation}
\end{lemma}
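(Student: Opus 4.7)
I would bound the two contributions separately, using very different tools: the $w_j$-moment is handled by the independence structure of the auxiliary variables, while the $v_j$-moment requires a generator estimate on the interacting dynamics.

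For the $w_j$-term, the key observation is that under $R_N(t)$ the marginal of $\Sigma_N(t)$ remains a tensor product. Indeed, for $H$ depending only on $\Sigma_N$, using the marginalization identities $\int\!\rmd\tilde x\,\Phi_{x,y}(\tilde x,\tilde y)=\varphi(\tilde y-y)$ (from Eq.~\eqref{jointxy0}) and $\int\!\rmd\tilde v\,\mc M^\varphi(\tilde x,\tilde v;\tilde y,\tilde w)=M_g^\varphi(\tilde y,\tilde w)$, one verifies that $\mc L_Q H = \mc L_N^g H$. Since $\mc L_N^g$ is a direct sum of single-particle generators, it preserves the product structure of $\Sigma_N(0)\sim f_0^{\otimes N}$, so the $w_j(t)$ are i.i.d.\ with common marginal given by the velocity marginal of $g(t)$. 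Therefore
\[
\int\!\rmd R_N(t)\,\frac{1}{N}\sum_j |w_j|^{16} = \int\!\rmd x\,\rmd v\, g(x,v,t)\,|v|^{16},
\]
which, by Proposition~\ref{prop:stim_uT} applied with $q>d+16$, is bounded by a constant depending only on $t$ and $f_0$, uniformly in $N$ and $\varphi$.

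For the $v_j$-term, I would apply Dynkin's formula to $\Phi(Z_N):=\frac{1}{N}\sum_j|v_j|^8$. A short computation shows that the loss term $-N\Phi$ is cancelled by the $j\ne i$ contributions from the gain terms, leaving
\[
\mc L_N\Phi = -\Phi + \frac{1}{N}\sum_{i=1}^N \int\!\rmd\tilde x_i\,\varphi(\tilde x_i-x_i)\,\mu_8(\tilde x_i),
\]
where $\mu_8(\tilde x):=\int\!\rmd\tilde v\,|\tilde v|^8 M_{Z_N}^\varphi(\tilde x,\tilde v)\le C\big(|u_N^\varphi(\tilde x)|^8+T_N^\varphi(\tilde x)^4\big)$ by the standard Gaussian moment bound. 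Using the probability weights $p_j(\tilde x):=\varphi(\tilde x-x_j)/\sum_k\varphi(\tilde x-x_k)$ (which satisfy $p_j\le C_\varphi\|\varphi\|_\infty/N$ by definition of $C_\varphi$), Jensen's inequality applied to $u_N^\varphi=\sum_j p_j v_j$ and $T_N^\varphi\le d^{-1}\sum_j p_j|v_j|^2$ for the convex maps $t\mapsto t^8$ and $t\mapsto t^4$ yields $|u_N^\varphi|^8+T_N^\varphi{}^4\le C\sum_j p_j|v_j|^8\le C\,C_\varphi\|\varphi\|_\infty\,\Phi\le C\Gamma_\varphi\,\Phi$, the last step from $(C_\varphi\|\varphi\|_\infty)^8\le\Gamma_\varphi$. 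Hence $\mc L_N\Phi\le C\Gamma_\varphi\Phi$, and Grönwall's inequality together with the finite initial value $\int f_0|v|^8\,\rmd x\rmd v<\infty$ (from the Gaussian tail in Eq.~\eqref{eq:f0}) gives $\int\!\rmd R_N(t)\,\Phi\le C\exp(C\Gamma_\varphi t)$.

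The argument is essentially routine once the two key ingredients---independence of the $w_j$'s and the generator identity for the $v_j$'s---are identified. The only step that requires some care is the rigorous validity of Dynkin's formula for the unbounded observable $|v|^8$, standard via truncation using that the jump rates are bounded by $N$ and the gain terms are positive. The small conceptual point worth isolating is that a single Jensen inequality controls both the eighth-power mean and the fourth-power temperature of the empirical Maxwellian simultaneously by $\sum_j p_j|v_j|^8$, producing the clean $\Gamma_\varphi$-dependence in the exponent.
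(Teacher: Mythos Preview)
Your proof is correct and follows essentially the same route as the paper: split the two moment terms, handle the $w_j$'s via the fact that the $\Sigma_N$-marginal of $R_N(t)$ is $g(t)^{\otimes N}$ together with the $\mc N_q$-bound on $g$, and handle the $v_j$'s by a generator computation closed via the bound $p_j\le C_\varphi\|\varphi\|_\infty/N$ and Gr\"onwall. The paper works with $|v_1|^8$ and invokes symmetry of $F_N$ rather than with the symmetrized observable $\tfrac1N\sum_j|v_j|^8$, and it bounds $p_j$ before applying Jensen (obtaining $(C_\varphi\|\varphi\|_\infty)^8$ rather than your sharper $C_\varphi\|\varphi\|_\infty$), but these are cosmetic variations of the same argument.
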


\begin{proof}
From the estimate on $\mc N_q(g)$ in Eq.~\eqref{eq:utK} and the symmetry of $F_N(t)$ we have,
\[
\int\!\rmd R_N(t) \, \frac 1N \sum_j \big(|v_j|^8 + |w_j|^{16}\big)  = \int\!\rmd Z_N\, F_N(Z_N,t) \, |v_1|^8 + C\,,
\]
so that we only need an upper bound on the first term in the right-hand side. To this purpose, recalling the explicit expression of the generator Eq.~\eqref{gen}, we compute,
\begin{align}
\label{stv}
& \frac{\rmd}{\rmd t} \int\!\rmd  F_N(t) \, |v_1|^8   \nonumber \\ & \qquad = - \int\!\rmd Z_N\, F_N(Z_N,t)\, |v_1|^8+ \int\!\rmd Z_N\, F_N(Z_N,t)\int\!\rmd \xi\,\varphi(\xi)  \nonumber \\ & \qquad\qquad \times\int\! \rmd \tilde v_1\, M_{Z_N}^\varphi (x_1+\xi,\tilde v_1) |\tilde v_1|^8 \nonumber\\ & \qquad =  - \int\!\rmd Z_N\, F_N(Z_N,t)\, |v_1|^8 + \int\!\rmd Z_N\, F_N(Z_N,t)\int\!\rmd \xi\,\varphi(\xi)  \nonumber \\ & \qquad\qquad \times \int\! \rmd \eta \, M_{0,1}(\eta) \, \big|u_N^\varphi(x_1+\xi) - \sqrt{T_N^\varphi(x_1+\xi)}\,\eta\big|^8\, ,
\end{align}
where $ M_{0,1}$ is the Gaussian centered in $0$ with unitary variance. Since, in view of Eq.~\eqref{pq1},
\[
|u_N^\varphi(x_1+\xi)| \le \frac{C_\varphi \|\varphi\|_\infty}N \sum_j |v_j| \,, \qquad T_N^\varphi(x_1+\xi) \le \frac{C_\varphi \|\varphi\|_\infty}N \sum_j |v_j|^2\,,
\]
the Gaussian integral in the right-hand side of Eq.~\eqref{stv} is bounded by a constant multiple of $(C_\varphi \|\varphi\|_\infty)^8 \frac 1N \sum_j |v_j|^8$, so that, by using again the symmetry of $F_N(t)$,
\[
\frac{\rmd}{\rmd t} \int\!\rmd  F_N(t) \, |v_1|^8  \le C (C_\varphi \|\varphi\|_\infty)^8\int\!\rmd  F_N(t) \, |v_1|^8\,,
\]
from which the lemma follows by Gr\"{o}nwall's inequality and the assumption on the initial distribution function $f_0$ given in Eq.~\eqref{eq:f0}.
\qed\end{proof}

\subsection{Conclusion (proof of Eq.~(\ref{I_N}))}
\label{sec:3.4}

Our task is to estimate from above the integral in the right-hand side of Eq.~\eqref{I<1} with a constant multiple of $I_N(t)$ plus a small (order $1/N$) term. To this end, we use Eq.~\eqref{w2<} and decompose 
\begin{equation}
\label{D1st0}
\int\! \rmd R_N(t)\, D(Z_N,\Sigma_N) = K_1 + K_2\,,
\end{equation}
with 
\[
K_1 = \int\! \rmd R_N(t)\, D_1(Z_N,\Sigma_N)\,, \qquad  K_2 = \int\! \rmd g(t)^{\otimes N}\, \mc E(\Sigma_N)\,.
\]
Recalling the definition Eq.~\eqref{E} of $\mc E$, from the law of large numbers we have that
\begin{equation}
\label{K3}
K_2 \le \frac CN\,.
\end{equation}
Now, we want to use Lemma \ref {lem:esti1} to bound $K_1$ by means of $I_N(t)$. This is not immediate, due to the factor $\big(1+ \frac 1N {\sum}_j |w_j|^4 \big)$ appearing in the right-hand side of Eq.~\eqref{D1<}. The strategy is to decompose the domain of integration, by introducing  the ``good set'' 
\[
\mc G_a = \left\{(Z_N,\Sigma_N) \colon {\sum}_j |w_j|^4 \le Na \right\}, \quad a>0\,,
\]
where the right-hand side of Eq.~\eqref{D1<} is under control, and show that for $a$ sufficiently large the contribution of the integration outside $\mc G_a$ is order $1/N$(due to the law of large numbers). With this in mind, we decompose
\begin{equation}
\label{D1st}
K_1 = K_{1,1} + K_{1,2}\,,
\end{equation}
with 
\[
K_{1,1} = \int_{\mc G_a}\! \rmd R_N(t)\, D_1(Z_N,\Sigma_N)\,,\qquad K_{1,2} = \int_{\mc G_a^\complement}\! \rmd R_N(t)\, D_1(Z_N,\Sigma_N)\,.
\]
In view of Eq.~\eqref{D1<}, if $C_t(a) = C \Gamma_\varphi (1+a)$ then, for any $ (Z_N,\Sigma_N) \in \mc G_a$,
\[
D_1(Z_N,\Sigma_N) \le C_t(a) \bigg(\frac{|X_N-Y_N|^2+|V_N-W_N|^2}N +|x_1-y_1|^2\bigg)
\]
so that, noticing that the law $R_N(t)$ is symmetric,
\begin{align}
\label{K1}
K_{1,1} & \le C_t(a) \int\! \rmd R_N(t)\, \bigg(\frac{|X_N-Y_N|^2+|V_N-W_N|^2}N +|x_1-y_1|^2\bigg) \nonumber \\ & \le 2 C_t(a) I_N(t)\,.
\end{align}

In estimating $K_{1,2}$, we first observe that
\begin{align}
\label{K2b}
K_{1,2} & \le \Bigg(\int\! \rmd R_N(t)\, D_1(Z_N,\Sigma_N)^2 \Bigg)^{1/2} \Bigg(\int_{{\sum}_j |w_j|^4 > Na}\! \rmd g(t)^{\otimes N}\, \Bigg)^{1/2} \nonumber \\ & \le C\Gamma_\varphi \exp(C\Gamma_\varphi)  \Bigg(\int_{{\sum}_j |w_j|^4 > Na}\! \rmd g(t)^{\otimes N} \Bigg)^{1/2} \,,
\end{align}
where, in the last inequality, we first used that the square of the right-hand side in Eq.~\eqref{D1<} is bounded by a constant multiple of $\Gamma_\varphi\frac 1N \sum_j \big(|v_j|^8 + |w_j|^{16}\big)$, and then we applied Eq.~\eqref{eq:w8}. We now show that, by the law of large numbers, if $a$ is large enough then the integral in the right-hand side of Eq.~\eqref{K2b} is vanishing as $N\to +\infty$. More precisely, from the estimate on $\mc N_q(g)$ in Eq.~\eqref{eq:utK}, there is $M=M(t,f_0)$ such that $\int\!\rmd y\, \rmd w\, g(y,w,t) |w|^4 \le M$. Therefore, letting $\xi_N= \frac 1N\sum_j|w_j|^4$ and $\bb E(\xi_N) = \int\! \rmd g(t)^{\otimes N} \xi_N$, if $a>2M$ we have
\[
\int_{{\sum}_j |w_j|^4 > Na}\! \rmd g(t)^{\otimes N} \le \int_{|\xi_N-\bb E(\xi_N)|>M}\! \rmd g(t)^{\otimes N} \,.
\]
Therefore, by the law of large numbers (i.e., Chebyshev's inequality), 
\[
\int_{{\sum}_j |w_j|^4 > Na}\! \rmd g(t)^{\otimes N} \le \frac CN\,,
\]
so that
\begin{equation}
\label{K2}
K_{1,2} \le \frac{C\Gamma_\varphi}N \exp(C\Gamma_\varphi) \,.
\end{equation}

In view of Eqs.~\eqref{I<1}, \eqref{D1st0}, \eqref{D1st}, \eqref{K1}, \eqref{K2}, and \eqref{K3}, we conclude that, for any $0<s\le t$,
\begin{align}
\label{I<11}
\frac{\rmd}{\rmd s} I_N(s)  & \le [1+2C_t(a)] I_N(s) + \frac{C\Gamma_\varphi}N \exp(C\Gamma_\varphi)\,,
\end{align}
which implies Eq.~\eqref{I_N} by Gr\"{o}nwall's inequality.

\section{Lipschitz estimates and removal of the cut-off}
\label{sec:4}

In this section we prove Theorem \ref{thm:fgd}. A preliminary result is the following lemma, where we provide $L^\infty$ bounds on the spatial derivatives of the solutions to either the BGK equation, Eq.~\eqref{eq:bgk}, or its regularized version, Eq.~\eqref{eq:kin}. We emphasize that, in the latter case, these estimates do not depend on the smearing function $\varphi$.

\begin{lemma}
\label{lem:dgf}
Under the hypothesis of Theorem \ref{thm:fgd}, for any $t\ge 0$,
\begin{equation}
\label{dg}
\mc N_q(|\nabla_x f(t)|) + \mc N_q(|\nabla_x g(t)|) \le C \,,
\end{equation}
where $f(t)$ [resp.~$g(t)$] is the solution to the BGK [resp.~regularized BGK] equation with initial condition $f(0) = g(0) = f_0$ given by Proposition \ref{prop:bgk} [resp.~Proposition \ref{prop:stim_uT}].
\end{lemma}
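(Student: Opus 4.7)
The plan is to close a Gr\"onwall loop for the weighted seminorms $h_g(t):=\mc N_q(|\nabla_x g(t)|)$ and $h_f(t):=\mc N_q(|\nabla_x f(t)|)$, starting from the mild (Duhamel) formulation and moving derivatives along the free-streaming characteristics. Concretely, differentiating
\[
g(x,v,t) = \rme^{-t}f_0(x-vt,v) + \int_0^t\!\rmd s\,\rme^{-(t-s)}(\varrho_g^\varphi M_g^\varphi)(x-v(t-s),v,s)
\]
in $x$, multiplying by $(1+|v|^q)$ and taking the supremum in $(x,v)$ yields
\[
h_g(t) \le \mc N_q(|\nabla_x f_0|) + \int_0^t\!\rmd s\,\mc N_q\big(|\nabla_x(\varrho_g^\varphi M_g^\varphi)(s)|\big),
\]
so the entire task reduces to a pointwise estimate of the source-term gradient by $C\,h_g(s)$ with a constant $C$ \emph{independent of $\varphi$}; Gr\"onwall then closes the loop.

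To estimate $\mc N_q(|\nabla_x(\varrho_g^\varphi M_g^\varphi)(s)|)$ I would write
\[
\nabla_x(\varrho_g^\varphi M_g^\varphi) = (\nabla_x\varrho_g^\varphi)\, M_g^\varphi + \varrho_g^\varphi\, \nabla_x M_g^\varphi,
\]
and use the chain rule applied to \eqref{eq:maxg} to express $\nabla_x M_g^\varphi$ in terms of $\nabla_x u_g^\varphi$ and $\nabla_x T_g^\varphi$ divided by $T_g^\varphi$. The $L^\infty$ bounds $|u_g^\varphi|,T_g^\varphi\le K_{q,t}$, the lower bound $T_g^\varphi\ge A_t$ from Proposition \ref{prop:stim_uT}, and the Gaussian tails of $M_g^\varphi$ then give $\mc N_q(M_g^\varphi)\le C$ as well as $\mc N_q(|\nabla_x M_g^\varphi|) \le C(\|\nabla_x u_g^\varphi\|_\infty + \|\nabla_x T_g^\varphi\|_\infty)$. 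The decisive step for $\varphi$-uniformity is to move derivatives off $\varphi$ via integration by parts on the torus: from \eqref{eq:rphi}--\eqref{eq:tphi},
\[
\nabla_x\varrho_g^\varphi = \varphi * \nabla_x\varrho_g,\qquad \nabla_x(\varrho_g^\varphi u_g^\varphi) = \varphi*\nabla_x(\varrho_g u_g),
\]
and similarly for $\varrho_g^\varphi T_g^\varphi$. Bounding the unregularized moment derivatives through $\int|\nabla_x g|(1+|v|^k)\,\rmd v \le C\,h_g(s)$ for $k\in\{0,1,2\}$ (which is precisely where the hypothesis $q>d+2$ enters), and dividing by the $\varphi$-independent lower bound $\varrho_g^\varphi\ge C_2\rme^{-s}$, one obtains $\|\nabla_x u_g^\varphi\|_\infty+\|\nabla_x T_g^\varphi\|_\infty \le C\,h_g(s)$, and hence the desired target estimate.

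The argument for the unregularized BGK equation proceeds along identical lines, only more simply: the representations $\nabla_x\varrho_f=\int\nabla_x f\,\rmd v$, etc., need no integration by parts, and Proposition \ref{prop:bgk} furnishes the required upper and lower bounds on the hydrodynamical fields. The main conceptual obstacle is precisely the $\varphi$-uniformity in the regularized case: a naive differentiation under the convolution would produce a factor $\|\nabla\varphi\|_\infty$, which blows up along the approximating sequence \eqref{phieps}. The integration-by-parts device sidesteps this, and it relies crucially on Proposition \ref{prop:stim_uT} providing $\varphi$-independent $L^\infty$ bounds and strict positivity for $\varrho_g^\varphi$ and $T_g^\varphi$.
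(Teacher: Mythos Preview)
Your proposal is correct and follows essentially the same route as the paper: Duhamel for $\nabla_x f$ (resp.\ $\nabla_x g$), chain rule on the Maxwellian source to reduce to $\|\nabla_x\varrho\|_\infty$, $\|D_x u\|_\infty$, $\|\nabla_x T\|_\infty$, bound these via $\int |\nabla_x f|\,|v|^j\,\rmd v \le C\,\mc N_q(|\nabla_x f|)$ for $j\le 2$ (using $q>d+2$), and close by Gr\"onwall. The paper actually details the unregularized case and dismisses the regularized one as ``treated in the same way''; your choice to spell out the regularized case and highlight the integration-by-parts step $\nabla_x(\varphi*h)=\varphi*\nabla_x h$ as the mechanism that avoids a $\|\nabla\varphi\|_\infty$ factor is a worthwhile clarification that the paper leaves implicit.
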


\begin{proof}
We prove the claim for the solution to the BGK equation, the case of the regularized BGK equation can be treated in the same way.

By differentiating Eq.~\eqref{eq:bgk} we have,
\[
(\partial_t + v\cdot \nabla_x + 1) \nabla_x f = \varrho_f M_f Q_f\,,
\]
with 
\[
Q_f = \frac{\nabla_x \varrho_f}{\varrho_f} + \frac{(D_x u_f)^T (v-u_f)}{T_f} + \bigg(\frac{|v-u_f|^2}{2T_f^2}  - \frac{d}{4\pi T_f} \bigg) \nabla_x T_f\,.
\]
Therefore, by Duhamel formula,
\begin{equation}
\label{eq:3}
\nabla_x f(x,v,t) = \rme^{-t}\nabla_x f_0(x-vt,v) + \int_0^t\!\rmd s\, \rme^{-(t-s)} (\varrho_f M_f Q_f)(x-v(t-s),v,s)\,.
\end{equation}
To estimate the $\mc N_q$-norm of the integral in the right-hand side of Eq.~\eqref{eq:3} we first observe that
\[
\begin{split}
(1+|v|^q) |v-u_f|^j M_f & \le C (1+|v-u_f|^q + |u_f|^q) |v-u_f|^j M_f \\ & \le C T_f^{(j-d)/2}(1 +T_f^{q/2} + |u_f|^q)\qquad \forall\, j=0,1,2\,.
\end{split}
\]
Moreover,
\begin{align*}
|\nabla_x \varrho_f| & \le \int\!\rmd v\, |\nabla_x f| \le C \mc N_q(|\nabla_x f|)\,, \\ |D_x u_f| & \le \frac{|u_f| |\nabla_x\varrho_f|}{\varrho_f} + \frac{1}{\varrho_f}\int\!\rmd v\, |\nabla_x f| \, |v| \le C \frac{(1+|u_f|) \mc N_q(|\nabla_xf|)}{\varrho_f}\,, \\ |\nabla_x T_f|  & \le \frac{T_f |\nabla_x\varrho_f|}{\varrho_f} + \frac{1}{\varrho_f}\int\!\rmd v\, |\nabla_x f| \, |v-u_f|^2 \\ & \le C \frac{(T_f+1+|u_f|^2) \mc N_q(|\nabla_xf|)}{\varrho_f}\,,
\end{align*}
where we used that if $q>d+2$ then
\[
\int\! \rmd v\, |\nabla_x f| |v|^j = \int\! \rmd v\, |\nabla_x f| |v|^j \frac{1+|v|^q}{1+|v|^q} \le C \mc N_q(|\nabla_x f|) \qquad \forall\, j=0,1,2\,.
\]
Therefore, in view of Eqs.~\eqref{eq:utf} and \eqref{eq:rT>}, from the above estimates we deduce that $\mc N_q(\varrho_f M_f Q_f) \le C \mc N_q(|\nabla_xf|)$. The estimate on $ \mc N_q(|\nabla_xf|)$ then follows from Eq.~\eqref{eq:3} and Gr\"{o}nwall's inequality.
\qed\end{proof}

\noindent\textit{Proof of Theorem \ref{thm:fgd}}\hspace{2truept} 
We introduce the shorten notation $\varrho^\eps$, $u^\eps$, $T^\eps$ to denote the smeared local fields defined as in Eqs.~\eqref{eq:rphi}, \eqref{eq:uphi}, and \eqref{eq:tphi} with $\varphi_\eps$ in place of $\varphi$.

From Duhamel formula,
\[
f(x,v,t)- g^\eps(x,v,t) = \int_0^t\!\rmd s\, \rme^{-(t-s)} (\varrho_f M_f - \varrho^\eps M_{g^\eps})(x-v(t-s),v,s)\,,
\]
so that, setting
\[
D(t) = \int\!\rmd x\, \rmd v\, (1+ |v|^2) \, |f(x,v,t) - g^\eps(x,v,t)|\,,
\]
we have (after the change of variable $x \to x+v(t-s)$ on $\bb T^d$)
\begin{equation}
\label{eq:4}
D(t) \le \int_0^t\!\rmd s \int\!\rmd x\, \rmd v\,  (1+ |v|^2)  \big| (\varrho_f M_f - \varrho^\eps M_{g^\eps})(x,v,s) \big|\,.
\end{equation}

To estimate the right-hand side in Eq.~\eqref{eq:4} we argue as in \cite{PP}.  We set, for $\lambda\in [0,1]$,
\[
(\varrho_\lambda, u_\lambda, T_\lambda) = \lambda (\varrho_f, u_f, T_f) + (1-\lambda) (\varrho^\eps, u^\eps, T^\eps)
\]
and let $M_\lambda(v) = M_{u_\lambda,T_\lambda}(v)$, so that
\begin{align*}
& \int\!\rmd v\, (1+ |v|^2)\,  |\varrho_f M_f - \varrho^\eps M_{g^\eps}| \le \int_0^1\!\rmd \lambda\, \int\!\rmd v\, (1+ |v|^2)\, \bigg\{ |\varrho_f - \varrho^\eps| M_\lambda \\ & \hskip 3cm  + \varrho_\lambda |\nabla_u M_\lambda| \, |u_f - u^\eps| + \varrho_\lambda \bigg|\frac{\partial M_\lambda}{\partial T}\bigg| \, |T_f - T_{g^\eps}| \bigg\}  \\ & \hskip 2 cm \le C( |\varrho_f - \varrho^\eps| + |u_f - u^\eps| +  |T_f - T_{g^\eps}|)\,,
\end{align*}
where, in obtaining the last inequality, we first used that
\[
\begin{split}
\int\!\rmd v\, (1+ |v|^2)\, M_\lambda & \le 1 + |u_\lambda|^2 + T_\lambda\,, \\
\int\!\rmd v\, (1+ |v|^2)\,  |\nabla_u M_\lambda| & \le C \frac{1 + |u_\lambda|^2 + T_\lambda}{\sqrt{T_\lambda}}\,, \\ \int\!\rmd v\, (1+ |v|^2)\, \bigg|\frac{\partial M_\lambda}{\partial T}\bigg| & \le C \frac{1 + |u_\lambda|^2 + T_\lambda}{T_\lambda}\,,
\end{split}
\]
and then applied the lower and upper bounds on the hydrodynamical fields  given in Propositions \ref{prop:bgk} and \ref{prop:stim_uT}. 
Now, again from these propositions,
\[
\begin{split}
|u_f - u^\eps| & \le C \varrho_f |u_f - u^\eps| \le C (|\varrho_f u_f - \varrho^\eps u^\eps| + |u^\eps| \, |\varrho_f - \varrho^\eps|) \\ & \le C \int\!\rmd v\, (1+ |v|^2)\, |f-g^\eps| + C |\varrho_f - \varrho^\eps|\,, \\ 
|T_f - T^\eps| & \le C \varrho_f |T_f - T^\eps| \le C (|\varrho_f T_f - \varrho^\eps T^\eps| + |T^\eps| \, |\varrho_f - \varrho^\eps|) \\ & \le C \int\!\rmd v\, (1+ |v|^2)\, |f-g^\eps| + C |\varrho_f - \varrho^\eps|\,.
\end{split}
\]
Finally,
\[
\begin{split}
& |\varrho_f - \varrho^\eps|(x) \le \int\!\rmd v\,\Big| f(x,v,t) - \int\!\rmd y\, \varphi_\eps(x-y) g^\eps(y,v,t)\Big| \\ & \qquad \le \int\!\rmd v\, (1+ |v|^2)\, |f-g^\eps| + \int\!\rmd y \, \rmd v\, \varphi_\eps(x-y) |g^\eps(y,v,t) - g^\eps(x,v,t)| \\ & \qquad  \le \int\!\rmd v\, (1+ |v|^2)\, |f-g^\eps| + C\eps\,,
\end{split}
\]
where we used Eq.~\eqref{stim2} and Lemma \ref{lem:dgf} in the last inequality.

Inserting the above bounds in Eq.~\eqref{eq:4} we finally have,
\[
D(t) \le C \int_0^t\!\rmd s\,  D(s) + C \eps\,,
\]
which implies Eq.~\eqref{eq:f-g} by Gr\"{o}nwall's inequality.
\qed

\begin{remark}
\label{rem:1}
Note that  the convergence part of the particle approximation is carried out by using a weak topology. Actually, this is natural since such a proof is based on the law of large numbers. In contrast, in removing the cut-off we used a weighted $L^1$ topology. A direct use of the weak topology could be possible also in this part of the proof, but in this context it is much less natural, being the proof more complicate and the result weaker.
\end{remark}

\section{Concluding remarks}
\label{sec:5}

Let us consider, for the moment, the non-physical particle dynamics introduced in the present paper as really describing the behavior of the microscopic world. Then, it makes sense to exploit the scaling and the regime for which the kinetic picture given by the BGK model is appropriate. Proceeding as for the most popular kinetic equations, let $(X_N,V_N)  \in (\bb T^d)^N\times (\bb R^d)^N$ be the macroscopic variables. The evolution of the microscopic system takes place in $ \bb T_\eps^d $, the $d$-dimensional torus of side $\eps^{-1}$, where $\eps$ is a scale parameter. In other words, the microscopic variables are
\[
(\eps^{-1} X_N, V_N, \eps^{-1}t)
\]
(velocities are unscaled), and the time evolution of the law $\bar F_N$ of the microscopic process is given by the Fokker-Planck equation,
\begin{align*}
& (\partial_{\eps^{-1} t} + V_N\cdot\nabla_{\eps^{-1} X_N} )\bar F_N(\eps^{-1}X_N,V_N, \eps^{-1}t)  = - \gamma_N N \bar F_N (\eps^{-1} X_N, V_N, \eps^{-1}t ) \\ & \qquad + \gamma_N \sum _{i=1}^N \int\! \rmd \tilde x_i \int\! \rmd \tilde v_i\, \eps^{-d} \varphi (\eps^{-1} (x_i-\tilde x_i)) M_{(\eps^{-1} X_N^{i,\tilde x_i}, V_N^{i,\tilde v_i})}^\varphi (\eps^{-1} x_i, v_i) \\ & \qquad\qquad\qquad \times \bar F_N (\eps^{-1} X_N^{i,\tilde x_i}, V_N^{i,\tilde v_i}, \eps^{-1}t)\,, 
\end{align*}
where $\gamma_N$ modulates the intensity of the jump process suitably and $\varphi$ is not scaled. Note that $\tilde x_i$ in the above formula is a macroscopic variable which belongs to the unitary torus.

Actually $\varphi$ describes the interaction. A possible choice is the characteristic function of the unitary sphere or a smooth version of it. This means that only the particles at distance at most $1$ from a given particle determine its random jumps. 

Denoting by
\[
F_N(X_N,V_N,t) = \eps^{-dN} \bar F_N (\eps^{-1}X_N,V_N, \eps^{-1}t)
\]
the law expressed in the macro-variables, we arrive to
\begin{align}
\label{FP}
& (\partial_{ t} + V_N\cdot\nabla_{ X_N}) F_N(X_N,V_N, t) = - \frac {\gamma_N}{\eps} N  F_N ( X_N, V_N, t ) \nonumber \\ & \quad + \frac {\gamma_N}{\eps}  \sum _{i=1}^N  \int\! \rmd \tilde x_i \int\! \rmd \tilde v_i\,  \varphi_\eps (x_i-\tilde x_i) 
M_{(X_N^{i,\tilde x_i}, V_N^{i,\tilde v_i})}^{\varphi_\eps}  (x_i, v_i) {F_N} (X_N^{i,\tilde x_i}, V_N^{i,\tilde v_i}, t)\,. 
\end{align}
Here, we used that
\[
M_{(\eps^{-1}X_N,V_N)}^\varphi (\eps^{-1} x_i, v_i)=M_{Z_N}^{\varphi_\eps}  (x_i, v_i)\,,
\]
with $\varphi_\eps (x) = \eps^{-d} \varphi (x/\eps)$, as follows by a direct inspection. Indeed, it follows that
$$
u_N^\varphi ( \eps^{-1} x_i)=u_N^{\varphi_\eps}  (  x_i) \quad T_N^\varphi ( \eps^{-1} x_i)=T_N^{\varphi_\eps}  (  x_i)
$$
with the convention that the left-hand side is computed via $ \eps^{-1}  X_N, V_N $ and the right hand side via $X_N,V_N$.

Next, we assume the microscopic density $O(1)$, thus $N=\eps^{-d}$ (hydrodynamical density). We recall here that the hydrodynamic limit consists in scaling space and time only, the evolution for the hydrodynamical fields being obtained via the quick local thermalization toward the local equilibrium. In contrast, the kinetic description requires a suitable modification of the dynamics to moderate the number of the interactions per unit (macroscopic) time. In the present context, we do this by rescaling $\gamma_N = \eps$.  

In conclusion, we recover the particle dynamics we have considered, but the condition $\eps = N^{-1/d}$ is too severe for our approach, as we need $\eps \approx (\log N)^{-a}$ for some positive $a$.

Moreover, in contrast with the setting discussed in the present section, in our derivation of the BGK equation we have assumed that $\varphi$ is strictly positive, excluding the case of the characteristic function of a unitary ball. Therefore, we face now a new potential divergence related to a possible rarefaction in a given box. However, this issue could probably be handled via extra probabilistic estimates.

We finally mention that if we consider the Fokker-Planck equation \eqref{FP} with $\gamma_N$ replaced by $ \frac 1N \sum_j \varphi (x_i-x_j) $ we expect, at least formally, to recover the BGK equation with $\lambda=\varrho(x)$.

\appendix
\normalsize

\section{Proof of Proposition \ref{prop:stim_uT}}
\label{app:a}

We first observe that the smeared fields, defined in Eqs.~\eqref{eq:rphi}, \eqref{eq:uphi}, and \eqref{eq:tphi}, coincide with the usual hydrodynamical fields associated to the smeared distribution function $g^\varphi(x,v) := \int\!\rmd y\, \varphi(x-y) g(y,v)$, i.e.,
\[
\varrho_g^\varphi = \varrho_{g^\varphi}\,, \quad \varrho_g^\varphi u_g^\varphi =  \varrho_{g^\varphi} u_{g^\varphi}\,, \quad \varrho_g^\varphi T_g^\varphi =  \varrho_{g^\varphi} T_{g^\varphi}\,.
\]
Therefore, according to \cite[Proposition 2.1]{PP}, we find the following pointwise estimates for $\rho_g^\varphi$, $u_g^\varphi$, and   $T_g^\varphi$:
\begin{align*}
& \text{(i) } \frac {\rho^\varphi }{ (T^\varphi )^{d/2}} \leq C N_0( g^\varphi)\,; \\ & \text{(ii) } \rho^\varphi ( T^\varphi +(u^\varphi )^2)^{ (q-d)/2} \leq C_q N_q (g^\varphi )\text{ either for $q> d+2$ or for $0\leq q <d$}\,; \\
& \text{(iii) } \frac{\rho^\varphi |u^\varphi|^{d+q}}{[ T^\varphi +(u^\varphi )^2) T^\varphi ]^{d/2} } \text{ for $q>1$}\,.
\end{align*}
Above, $C,C_q$ are constants independent of $\varphi$ and
\[
N_q(f) = \sup_v |v|^q f(v)\,, \quad    q\geq 0\,,
\]
for a given positive function $f$. 

As a consequence, following \cite{PP}, we infer
\[
\sup_v  |v|^q M_g^{\varphi} (x,v) \le C_q N_q( g^\varphi)
\]
and hence, writing the equation for $g$ in mild form and recallig Eq. \eqref{eq:Nn}, we obtain
\[
\mc N_q(g(t)) \le \mc N_q(f_0)+C_q\int_0^t\!\rmd s\,  \mc N_q( g^\varphi (s))\,.
\]
The a priori bound $\mc N_q(g(t)) \le  \mc N_q(f_0) \exp(C_q t)$ follows by  the obvious inequality $ N_q( g^\varphi (s)) \leq N_q( g (s))$ and the Gr\"{o}nwall's lemma.

Provided this estimate, by arguing exactly as in the proof of \cite[Theorem 3.1]{PP}, we construct the solution $g(t)$ by establishing the Lipschitz continuity of the operator $g \to \rho_g^\varphi M_g^{\varphi} -g$ in $L^1( (1+v^2) \rmd x \rmd v)$ and using the standard iteration scheme. Moreover, exactly as in \cite{PP}, the bounds \eqref{eq:utK}, \eqref{stimrho}, and \eqref{eq:TA} follow from this construction and the previous a priori estimates.


\begin{thebibliography}{99}

\bibitem{BGK} \textsc{Bhatnagar, P.L., Gross, E.P., Krook, M:} A model for collision processes in gases. I. Small amplitude processes in charged and neutral one-component systems. \textit{Phys. Rev.} \textbf{94}, 511--525 (1954)

\bibitem{CIP} \textsc{Cercignani, C., Illner, R., Pulvirenti, M.:} \textit{The Mathematical Theory of Dilute Gases}. Applied Mathematical Sciences \textbf{106}, Springer-Verlag, New York 1994

\bibitem{GP} \textsc{Greenberg, W., Polewczak, J.:} A global existence theorem for the nonlinear BGK equation. \textit{J. Stat. Phys.} \textbf{55}, 1313--1321 (1989)

\bibitem{MW} \textsc{Mustafa, D., Wennberg, B.:} The BGK equation as the limit of an $N$ particle system. Preprint	arXiv:2002.08667

\bibitem{OP} \textsc{Olkin, I., Pukelsheim, F.:} The distance between two random vectors with given dispersion matrices. \textit{Linear Algebra Appl.} \textbf{48}, 25--263 (1982)

\bibitem{P} \textsc{Perthame, B.:} Global existence to the BGK model of Boltzmann equation. \textit{J. Diff. Eq.} \textbf{82}, 191--205 (1989)

\bibitem{PP} \textsc{Perthame, B., Pulvirenti, M.:} Weighted $L^\infty$ bounds and uniqueness for the Boltzmann BGK model. \textit{Arch. Ration. Mech. Anal.} \textbf{125}, 289--295 (1993)

\bibitem{S} \textsc{Saint-Raymond, L.:} From the BGK model to the Navier-Stokes equations. \textit{Ann. Sci. \'Ecole Norm. Sup.} \textbf{36},  271--317 (2003)
\end{thebibliography}
\end{document}